\documentclass[a4paper,12pt]{article}

\usepackage{amsmath,amssymb,amsthm}
\usepackage{mathtools}
\usepackage[margin=24mm]{geometry}
\usepackage[doublespacing]{setspace}
\usepackage{enumerate}
\usepackage{floatrow}
\usepackage[round]{natbib}

\newcommand{\keywords}[1]{\par\noindent\textbf{Keywords:}%
 #1\par}
\newcommand{\jel}[1]{\par\noindent\textbf{JEL classification codes:}%
 #1\par}
\newcommand{\msc}[1]{\par\noindent\textbf{Mathematics subject classification 2020:}%
 #1\par}
\makeatletter
\renewcommand{\p@enumii}{}
\makeatother
\makeatletter
\newcommand{\subproof}[1]{\if@inlabel\leavevmode\else\fi\par\textit{Proof of #1.}}
\makeatother
\newcommand{\rnd}[1]{\left(#1\right)}
\newcommand{\crly}[1]{\left\{#1\right\}}

\newcommand{\arrow}{\overset{\sim}{\leftrightarrow}}
\newcommand{\set}[2]{\left\{#1 \mid #2\right\}}

\newcommand{\Map}[2]{\operatorname{Map}\rnd{#1,#2}}
\newcommand{\dom}{\operatorname{dom}}
\newcommand{\cod}{\operatorname{cod}}
\newcommand{\graph}{\operatorname{graph}}
\newcommand{\id}{\operatorname{id}}
\newcommand{\ob}[1]{\operatorname{ob}\rnd{#1}}
\newcommand{\mor}[1]{\operatorname{mor}\rnd{#1}}
\newcommand{\op}[1]{#1^{\mathrm{op}}}
\newcommand{\Set}{\mathbf{Set}}
\newcommand{\Rel}{\mathbf{Rel}}
\newcommand{\Gam}{\mathbf{Gam}}
\newcommand{\NE}{\mathrm{NE}}
\newcommand{\PE}{\mathrm{PE}}
\renewcommand{\hom}[3]{\operatorname{hom}_{#1}\rnd{#2,#3}}
\newenvironment{proofsketch}{\begin{proof}[Proof sketch]}{\end{proof}}
\newtheorem{proposition}{Proposition}

\newtheorem{lemma}{Lemma}

\theoremstyle{definition}

\newtheorem{condition}{Condition}

\theoremstyle{remark}

\title
{
 Category of Strategic Games
 and
 Presheaf Corresponding to Solution Concepts or Welfare Criteria
}
\author
{
 Tomohiko Kawamori\thanks
 {
  Faculty of Economics, Meijo University,
  1-501 Shiogamaguchi, Tempaku-ku, Nagoya 468-8502, Japan.
  {\tt kawamori@meijo-u.ac.jp}
 }
}
\date
{
}

\begin{document}

\maketitle

\abstract
{
 We
 define a category of strategic games in which
 a morphism is a pair of a map between sets of players and a map between sets of strategy profiles with specific properties
 and
 show that
 this category is well-defined.
 Three cases are considered for the maps between sets of strategy profiles:
 they may be order-preserving, order-reflecting or order-embedding with respect to each player's preference relation.
 We
 define a presheaf on the category of games valued in a category of sets that sends each strategic game to a set of strategy profiles
 and
 present conditions for this presheaf to be well-defined.
 Two cases are considered for the morphisms in the category of sets:
 they may be relations or maps.
 We
 define a presheaf that sends each strategic game to the set of Nash equilibria (resp. Pareto efficient strategy profiles)
 and
 show that
 this presheaf is well-defined
 if and only if
 the maps between sets of strategy profiles are order-reflecting or order-embedding
 (resp. order-embedding),
 and
 the morphisms in the category of sets are relations.
}

\keywords
{
 category;
 strategic game;
 order-reflecting map;
 order-embedding map;
 presheaf;
 Nash equilibrium concept;
 Pareto efficiency criterion
}

\jel
{
 C65;
 C72
}

\msc
{
 18B99;
 91A06;
 91A10
}

\newpage
\section{Introduction}\label{sec:introduction}

This paper seeks to reframe the core concepts of game theory within the language of category theory.
Category theory provides a general framework for analyzing relationships between mathematical structures.\footnote
{
 See \cite{Awodey2010}.
}
It is often regarded as a universal language of mathematics,
much as set theory has long been.
Its applications extend not only across a wide range of fields of mathematics but also into areas such as computer science and philosophy beyond mathematics.
A category-theoretic framework for game theory
would
clarify its connections with other fields
or
enable the use of tools developed in those fields.
However,
category-theoretic studies of game theory remain limited.
Even foundational questions,
such as how to define the category of strategic games,
have received several answers,
yet
have not yielded definitive ones.
In response to this gap,
this paper proposes another category-theoretic framework for game theory.
By doing so,
this paper aims to further the possibility of studying game theory by analogy with other mathematical fields.

This paper
defines a category of strategic games
and
shows that
this category is well-defined.
In this category,
an object is a strategic game,
while a morphism between two strategic games is a pair of a map between the sets of players and a map between the sets of strategy profiles with specific properties.
Three cases are considered for the maps between the sets of strategy profiles:
they may be order-preserving, order-reflecting or order-embedding with respect to each player's preference relation.
This definition is a natural and simple variant of that of the category of preordered sets.

This paper
defines presheaves corresponding to the Nash equilibrium concept and the Pareto efficiency criterion
and
examines conditions under which these presheaves are well-defined.
To do this,
we first define a presheaf on the category of strategic games valued in a category of sets that sends each strategic game to a set of strategy profiles.
Two cases are considered for the morphisms of the category of sets:
they may be relations or maps.
We present an equivalent condition for this presheaf to be well-defined.
As a particular instance of this presheaf,
we define a presheaf corresponding to the Nash equilibrium concept (resp. the Pareto efficiency criterion)
that sends each strategic game to the set of Nash equilibria (resp. Pareto efficient strategy profiles).
We show that
when the morphisms of the category of sets are relations,
this presheaf corresponding to the Nash equilibrium concept (resp. the Pareto efficiency criterion) is well-defined
if and only if the maps between the set of strategy profiles are order-reflecting or order-embedding (resp. order-embedding).
In contrast,
we show that
when the morphisms of the category of sets are maps,
this presheaf is not well-defined.

Our results are characterized by four notable features.
(1)
The restrictions imposed on morphisms between strategic games are minimal,
and
these morphisms are natural variants of those in the category of preordered sets:
maps between sets of strategy profiles are merely required to be order-preserving, order-reflecting or order-embedding,
without any further constraints such as surjectivity.
(2)
Owing to this minimal requirement,
the presheaf valued in the category of sets and maps corresponding to the Nash equilibrium concept or the Pareto efficiency criterion
fails to be well-defined.
Instead,
the presheaf on the category of sets and relations unavoidably emerges to achieve well-definedness.
(3)
Even when relations serve as morphisms,
functoriality, specifically the preservation of composition,
is not automatically guaranteed.
Therefore,
the explicit conditions for the presheaf to be well-defined are established.
(4)
The presheaf corresponding to the Nash equilibrium concept
requires a weaker condition to achieve well-definedness
than the one corresponding to the Pareto efficiency criterion:
the former requires order-reflecting maps between sets of strategy profiles,
whereas the latter requires order-embedding maps.

\cite{Lapitsky1999} defined a category of strategic games and a presheaf on it valued in the category of sets and \emph{maps} corresponding to the Nash equilibrium concept.
In contrast,
we show that
our presheaf valued in the category of sets and \emph{maps} corresponding to the Nash equilibrium concept are not well-defined.
Alternatively,
we show that
our presheaf valued in the category of sets and \emph{relations} is well-defined under a specific condition.
The results imply that
our minimal and natural definition of the category of strategic games,
where maps between sets of strategy profiles are merely required to be order-preserving, order-reflecting or order-embedding without any further constraints such as surjectivity,
requires the category of sets and \emph{relations} for the presheaf corresponding to the Nash equilibrium concept.
Further,
we consider the presheaf corresponding to the Pareto efficiency criterion.

\cite{Vannucci2024}, \cite{Tohme2025} and \cite{Jimenez2014} (resp. \cite{Streufert2021})
defined a category of strategic games (resp. extensive form games) in which morphisms include \emph{order-preserving} maps between sets of outcomes with respect to each player's preference.
\cite{Vannucci2024}
formulated an equilibrium concept as a functor from the \emph{discrete} category of strategic games to the category of sets.
\cite{Tohme2025}
formulated the Nash equilibrium concept as a map sending each game to the game whose set of outcomes is the set of Nash equilibrium outcomes in the original game,
which is \emph{not a functor}.
They defined a subcategory in which morphisms preserve Nash equilibria
and
showed that
if
the maps between sets of players in the morphisms are \emph{surjective}
and
those between sets of outcomes are \emph{surjective and order-embedding},
the morphisms preserve Nash equilibria.
\cite{Jimenez2014}
focused on strategic games with \emph{$n$ players and $n$ strategies of each player}
and
required the order-preserving maps to be \emph{bijective}.
They
showed that the morphisms preserve Nash equilibria.
\cite{Streufert2021} showed that
\emph{isomorphisms} preserve Nash equilibria and subgame perfect equilibria.
In this paper,
we show that
maps between sets of strategy profiles must be \emph{order-reflecting or order-embedding} for the \emph{presheaf} corresponding to the Nash equilibrium concept to be well-defined.
We \emph{do not impose}
any further constraints such as surjectivity on these maps
or
any constraints on strategic games.

\cite{Hedges2016} and \cite{Ghani2018} defined a category of games in which morphisms are \emph{equivalence classes of games} (known as open games)
to compose games.
In our paper,
the morphisms are not equivalence classes of games but \emph{pairs of a map between sets of players and a map between sets of strategy profiles}.

The remainder of this paper is organized as follows.
Section \ref{sec:definitions}
defines some set-theoretic and category-theoretic concepts.
Section \ref{sec:strategic_form_games}
defines a category of strategic games
and
shows that
it is well-defined.
Section \ref{sec:presheaves}
provides conditions
for a presheaf on the category of strategic games valued in a category of sets to be well-defined.
Section \ref{sec:nash_equilibrium_concept}
defines a presheaf corresponding to the Nash equilibrium concept
and
examines whether
it is well-defined.
Section \ref{sec:pareto_efficiency}
defines a presheaf corresponding to the Pareto efficiency criterion
and
examines whether
it is well-defined.
Section \ref{sec:conclusion} concludes the paper.
A proof sketch is provided immediately after each proposition to build intuition.
Full proofs of all propositions are provided in the Appendix.

\section{Definitions}\label{sec:definitions}

We define
set-theoretic concepts,
category-theoretic concepts
and
categories of sets.
These are standard definitions.

\subsection{Relations, maps and families}\label{subsec:sets}

For any $n \in \mathbb N$, any $n$-tuple $a$ and any $m \in \mathbb N$ with $m \leq n$,
let $a_m$ be the $m$th entry in $a$.

A \emph{relation} is a triple $\rnd{X,Y,G}$ of sets such that
$G \subset X \times Y$.
For any relation $R$,
let
$\dom\rnd{R} \coloneqq R_1$
(the \emph{domain of $R$}),
$\cod\rnd{R} \coloneqq R_2$
(the \emph{codomain of $R$})
and
$\graph\rnd{R} \coloneqq R_3$
(the \emph{graph of $R$}).
For
any relation $R$,
any $x \in \dom\rnd{R}$
and
any $y \in \cod\rnd{R}$,
write $x \mathrel{R} y$ to mean $\rnd{x,y} \in \graph\rnd{R}$.
For any sets $X$ and $Y$,
a \emph{relation between $X$ and $Y$} is a relation $R$ such that
$\dom\rnd{R} = X$ and $\cod\rnd{R} = Y$.
A \emph{left-total relation} (resp. \emph{right-total relation}) is a relation $R$ such that
for any $x \in \dom\rnd{R}$ (resp. $y \in \cod\rnd{R}$),
for some $y \in \cod\rnd{R}$ (resp. $x \in \dom\rnd{R}$),
$x \mathrel{R} y$.
A \emph{right-unique relation} is a relation
for any $x \in \dom\rnd{R}$,
for any $y,y' \in \cod\rnd{R}$,
if $x \mathrel{R} y$, and $x \mathrel{R} y'$,
then $y = y'$.

For
any relation $R$,
any $X \subset \dom\rnd{R}$
and $Y \subset \cod\rnd{R}$,
let
\begin{align*}
 R|_X^Y \coloneqq \rnd{X,Y,\graph\rnd{R} \cap \rnd{X \times Y}}
\end{align*}
(the \emph{restriction of $R$ to $X$ and $Y$}).
For any relations $R$ and $S$ such that $\cod\rnd{R} = \dom\rnd{S}$,
let
\begin{align*}
 S R
 \coloneqq
 \rnd
 {
  X,
  Z,
  \set
  {
   \rnd{x,z} \in X \times Z
  }
  {
   \exists y \in Y
   \rnd{\rnd{x \mathrel{R} y} \wedge \rnd{y \mathrel{S} z}}
  }
 },
\end{align*}
where
$X = \dom\rnd{R}$,
$Y = \cod\rnd{R} = \dom\rnd{S}$,
and
$Z = \cod\rnd{S}$
(the \emph{composite relation of $R$ and $S$}).
For any relation $X$,
let
\begin{align*}
 R^{-1}
 \coloneqq
 \rnd
 {
  Y,
  X,
  \set
  {\rnd{y,x} \in Y \times X}
  {x \mathrel{R} y}
 },
\end{align*}
where
$X = \dom\rnd{R}$,
and
$Y = \cod\rnd{R}$
(the \emph{inverse relation of $R$}).
An \emph{endorelation} is a relation $R$ such that
$\dom\rnd{R} = \cod\rnd{R}$.
For any set $X$,
let
\begin{align*}
 \id_X
 \coloneqq
 \rnd
 {
  X,
  X,
  \set
  {
   \rnd{x,y} \in X^2
  }
  {
   x = y
  }
 }
\end{align*}
(the \emph{identity relation on $X$}).

A \emph{map} is a left-total and right-unique relation.
For
any map $f$
and
any $x \in \dom\rnd{f}$,
let $f\rnd{x}$ be the element in $\cod\rnd{f}$ such that
$\rnd{x,f\rnd{x}} \in \graph\rnd{f}$
(the \emph{image of $x$ under $f$}).
For any map $f$
and
any $X \subset \dom\rnd{f}$,
let
\begin{align*}
 f\rnd{X}
 \coloneqq
 \set{y \in \cod\rnd{f}}{\exists x \in X \rnd{f\rnd{x} = y}}
\end{align*}
(the \emph{image of $X$ under $f$}).
For any map $f$
and
any $Y \subset \cod\rnd{f}$,
let
\begin{align*}
 f^{-1}\rnd{Y}
 \coloneqq
 \set{x \in \dom\rnd{f}}{\exists y \in Y \rnd{f\rnd{x} = y}}
\end{align*}
(the \emph{preimage of $Y$ under $f$}).
For any map $f$
and
any $y \in \cod\rnd{f}$,
let $f^{-1}\rnd{y} \coloneqq f^{-1}\rnd{\crly{y}}$
(the \emph{preimage of $y$ under $f$}).
For any sets $X$ and $Y$,
a \emph{map from $X$ to $Y$} is a map $f$ such that
$\dom\rnd{f} = X$ and $\cod\rnd{f} = Y$.
For any sets $X$ and $Y$,
let $\Map{X}{Y}$ be the set of maps from $X$ to $Y$.
For any maps $f$ and $g$ such that $\cod\rnd{f} = \dom\rnd{g}$,
\begin{align*}
 g f
 =
 \rnd
 {
  \dom\rnd{f},
  \cod\rnd{g},
  \set
  {\rnd{x,y} \in \dom\rnd{f} \times \cod\rnd{g}}
  {g\rnd{f\rnd{x}} = y}
 }
\end{align*}
(the \emph{composite map of $f$ and $g$}).
For any set $X$ and any $x \in X$,
$\id_X\rnd{x} = x$
(the \emph{identity map on $X$}).

A \emph{family} is a pair $\rnd{I,G}$ of sets such that
for some set $S$,
$\rnd{I,S,G}$ is a map from $I$ to $S$.\footnote
{
 Or equivalently,
 a family is a pair $\rnd{I,G}$ of sets such that
 for any $g \in G$,
 $g$ is a pair such that $g_1 \in I$,
 and
 for any $i \in I$,
 for some $g \in G$,
 $g_1 = i$,
 and
 for any $g,g'$,
 if $g_1 = g_1'$,
 then $g_2 = g_2'$.
 A family is a variant of a map,
 where the codomain is not specified.
}
For any family $f$,
let
$\dom\rnd{f} \coloneqq f_1$
(the \emph{domain of $f$}),
and
$\graph\rnd{f} \coloneqq f_2$
(the \emph{graph of $f$}).
For any family $f$ and any $i \in \dom\rnd{f}$,
let $f_i$ be the set such that
$\rnd{i,f_i} \in \graph\rnd{f}$
(the \emph{image of $i$ under $f$}).
For any set $I$,
a \emph{family indexed by $I$} is a family $f$ such that
$\dom\rnd{f} = I$.
For any family $f$ and $J \subset \dom\rnd{f}$,
let $f_J$ be the family such that
$\dom\rnd{f_J} = J$,
and
for any $i \in J$,
$\rnd{f_J}_i = f_i$
(the \emph{restriction of $f$ to $J$}).
For any family $f$ and any $i \in \dom\rnd{f}$,
let $f_{-i} \coloneqq f_{\dom\rnd{f} \setminus \crly{i}}$
(the restriction of $f$ to $\dom\rnd{f} \setminus \crly{i}$).
For any family $F$,
let $\prod F$ be the set of families $f$ such that
$\dom\rnd{f} = \dom\rnd{F}$,
and
for any $i \in \dom\rnd{f}$,
$f_i \in F_i$.
(the \emph{Cartesian product of $F$}).\footnote
{
 The standard notation for the Cartesian product of $F$ is $\prod_{i \in I} F_i$.
 Meanwhile,
 for simplicity,
 this paper's notation is $\prod F$.
}

\subsection{Categories}\label{subsec:categories}

A \emph{category $C$} consists of the following data:\footnote
{
 For the definition of a category,
 see \cite{Awodey2010}.
 Note that
 \cite{Awodey2010} uses ``arrow'' instead of ``morphism.''
}
\begin{enumerate}
 \item
 $\ob{C}$:
 a collection of \emph{objects}.
 \item
 $\mor{C}$:
 a collection of \emph{morphisms}.
 \item
 $\dom$:
 an operation sending each $f \in \mor{C}$ to a $\dom\rnd{f} \in \ob{C}$
 (the \emph{domain of $f$}).
 \item
 $\cod$:
 an operation sending each $f \in \mor{C}$ to a $\cod\rnd{f} \in \ob{C}$
 (the \emph{codomain of $f$}).
 \item
 $\circ$:
 an operation sending
 each pair of $f \in \mor{C}$ and $g \in \mor{C}$ such that
 $\cod\rnd{f} = \dom\rnd{g}$
 to a $g \circ f \in \mor{C}$ such that $\dom\rnd{g \circ f} = \dom\rnd{f}$, and $\cod\rnd{g \circ f} = \cod\rnd{g}$
 (the \emph{composite morphism of $f$ and $g$}).
 \item
 $\id$:
 an operation sending
 each $a \in \ob{C}$
 to an $\id_a \in \mor{C}$ such that
 $\dom\rnd{\id_a} = \cod\rnd{\id_a} = a$
 (the \emph{identity morphism on $a$}).
 \item
 For any $f,g,h \in \mor{C}$ such that
 $\cod\rnd{f} = \dom\rnd{g}$,
 and
 $\cod\rnd{g} = \dom\rnd{h}$,
 $h \circ \rnd{g \circ f} = \rnd{h \circ g} \circ f$
 (the associative law).
 \item
 For any $f \in \mor{C}$,
 $f \circ \id_{\dom\rnd{f}} = f$,
 and
 $\id_{\cod\rnd{f}} \circ f = f$
 (the unit law).
\end{enumerate}
For any $a,b \in \ob{C}$,
let $\hom{C}{a}{b}$ be the collection of $f \in \mor{C}$ such that
$\dom\rnd{f} = a$ and $\cod\rnd{f} = b$.
For any $f,g \in \mor{C}$,
say that $f$ and $g$ are \emph{composable}
if and only if $\cod\rnd{f} = \dom\rnd{g}$.
For any composable $f,g \in \mor{C}$,
write $g f$ instead of $g \circ f$.

For any category $C$,
let $\op{C}$ be the category as follows (the \emph{opposite category of $C$}):
\begin{enumerate}
 \item
 $\ob{\op{C}} = \ob{C}$.
 \item
 For any $a,b \in \ob{\op{C}}$,
 $\hom{\op{C}}{a}{b} = \hom{C}{b}{a}$.
 \item
 For any composable $f,g \in \mor{\op{C}}$,
 $g f$ in $\op{C}$ is equal to $f g$ in $C$.
 \item
 For any $a \in \ob{\op{C}}$,
 $\id_a$ in $\op{C}$ is equal to $\id_a$ in $C$.
\end{enumerate}

For any categories $C$ and $D$,
a \emph{functor from $C$ to $D$} is an operation as follows:
\begin{enumerate}
 \item
 $F$ sends each $a \in \ob{C}$ to an $F\rnd {a} \in \ob{D}$.
 \item
 For any $a,b \in \ob{C}$,
 $F$ sends each $f \in \hom{C}{a}{b}$ to an $F\rnd{f} \in \hom{D}{F\rnd{a}}{F\rnd{b}}$.
 \item
 For any composable $f,g \in \mor{C}$,
 $F\rnd{g f} = F\rnd{g} F\rnd{f}$
 (preservation of composition).
 \item
 For any $a \in \ob{C}$,
 $F\rnd{\id_a} = \id_{F\rnd{a}}$
 (preservation of identity).
\end{enumerate}

For any categories $C$ and $D$,
a \emph{contravariant functor from $C$ to $D$} is a functor from $\op{C}$ to $D$.
For any categories $C$ and $S$,
an \emph{$S$-valued presheaf on $C$} is a contravariant functor from $C$ to $S$.

\subsection{Category of sets}\label{subsec:rel_and_set}

Let $\Rel$ be the category as follows (the \emph{category of sets and relations}):
\begin{enumerate}
 \item
 $\ob{\Rel}$ is the collection of sets.
 \item
 For any $S,T \in \ob{\Rel}$,
 $\hom{\Rel}{X}{Y}$ is the collection of relation $R$ between $S$ and $T$.
 \item
 For any composable $R,S \in \mor{\Rel}$,
 the composite morphism of $R$ and $S$ is the composite relation of $R$ and $S$.
 \item
 For any $X \in \ob{\Rel}$,
 the identity morphism on $X$ is the identity relation on $X$.
\end{enumerate}

Let $\Set$ be the category as follows (the \emph{category of sets and maps}):
\begin{enumerate}
 \item
 $\ob{\Set}$ is the collection of sets.
 \item
 For any $S,T \in \ob{\Set}$,
 $\hom{\Set}{S}{T} = \Map{S}{T}$.
 \item
 For
 any composable $f,g \in \mor{\Set}$,
 the composite morphism of $f$ and $g$ is the composite map of $f$ and $g$.
 \item
 For any $X \in \ob{\Set}$,
 the identity morphisms on $X$ is the identity map on $X$.
\end{enumerate}

\section{Categories of strategic games}\label{sec:strategic_form_games}

We
define a category of strategic games.\footnote
{
 For the definition of a strategic game,
 see \cite{Osborne1994}.
}
We then show that
this presheaf is well-defined.

A \emph{preordered set} is a pair $\rnd{S,P}$ such that
$P$ is an endorelaton with $\dom\rnd{P} = \cod\rnd{P} = S$ such that
for any $s \in S$,
$s \mathrel{P} s$
(\emph{reflexivity}),
and
for any $s,s',s'' \in S$,
$\rnd{s \mathrel{P} s'} \wedge \rnd{s' \mathrel{P} s''} \rightarrow \rnd{s \mathrel{P} s''}$
(\emph{transitivity}).
A \emph{totally preordered set} is a preordered set $\rnd{S,P}$ such that
for any $s,s' \in S$,
$\rnd{s \mathrel{P} s'} \vee \rnd{s' \mathrel{P} s}$
(\emph{completeness}).\footnote
{
 In economics,
 the property satisfying both completeness and transitivity is called \emph{rationality}.
}
For any preordered sets $\rnd{S,P}$ and $\rnd{T,Q}$,
an \emph{order-preserving map} (resp. \emph{order-reflecting map}, \emph{order-embedding map}) from $\rnd{S,P}$ to $\rnd{T,Q}$ is a map $f \in \Map{S}{T}$ such that
for any $s,s' \in S$,
$\rnd{s \mathrel{P} s'} \rightarrow \rnd{f\rnd{s} \mathrel{Q} f\rnd{s'}}$
(resp. $\rnd{s \mathrel{P} s'} \leftarrow \rnd{f\rnd{s} \mathrel{Q} f\rnd{s'}}$,
$\rnd{s \mathrel{P} s'} \leftrightarrow \rnd{f\rnd{s} \mathrel{Q} f\rnd{s'}}$).

A \emph{strategic game} is a triple $\rnd{N,S,\succsim}$ such that
$\dom\rnd{S} = \dom\rnd{\succsim} = N$,
and
for any $i \in N$,
$\rnd{\prod S,\succsim_i}$ is a totally preordered set.
An element in $N$ is referred to as a \emph{player}.
For any $i \in N$,
an element of $S_i$ is referred to as a \emph{strategy} of player $i$.
For any $i \in N$,
$\succsim_i$ is referred to as player $i$'s \emph{preference relation}
($s \succsim_i s'$ means that
player $i$ weakly prefers $s$ to $s'$).
For any strategic game $G$,
let
$N^G \coloneqq G_1$,
$S^G \coloneqq G_2$
and
$\succsim^G \coloneqq G_3$.
For any strategic game $G$ and any $i \in N^G$,
let
\begin{align*}
 {\succ_i^G}&
 \coloneqq
 \rnd
 {
  \prod S^G,
  \prod S^G,
  \graph\rnd{\succsim_i^G} \setminus \graph\rnd{\rnd{\succsim_i^G}^{-1}}
 }\\
 {\sim_i^G}&
 \coloneqq
 \rnd
 {
  \prod S^G,
  \prod S^G,
  \graph\rnd{\succsim_i^G} \cap \graph\rnd{\rnd{\succsim_i^G}^{-1}}
 }\\
 {\prec_i^G}&
 \coloneqq
 \rnd
 {
  \prod S^G,
  \prod S^G,
  \graph{\rnd{\succsim_i^G}^{-1}} \setminus \graph\rnd{\succsim_i^G}
 }
\end{align*}
($s \succ_i^G s'$ (resp. $s \sim_i^G s'$, $s \prec_i^G s'$) means that
player $i$ prefers $s$ to $s'$ (resp. is indifferent between $s$ and $s'$, prefers $s'$ to $s$)).
For any strategic game $G$ and $M \subset N^G$,
let
\begin{align*}
 \succsim_M^G
 \coloneqq
 \rnd
 {
  \prod S^G,
  \prod S^G,
  \bigcap_{i \in M} \graph\rnd{\succsim_i^G}
 }
\end{align*}
($s \succsim_M^G s'$ means that
all players in $M$ weakly prefer $s$ to $s'$).

Throughout the remainder of this paper,
fix $\arrow$ to denote any of
$\rightarrow$, $\leftarrow$ or $\leftrightarrow$,
i.e.,
implication, converse implication or equivalence.

Let $\Gam$ be the category as follows:
\begin{enumerate}
 \item
 $\ob{\Gam}$ is the collection of strategic games.
 \item\label{it:morphism}
 For any $G,H \in \ob{\Gam}$,
 $\hom{\Gam}{G}{H}$ is the set of $\rnd{f_1,f_2} \in \Map{N^G}{N^H} \times \Map{\prod S^G}{\prod S^H}$ such that
  for any $j \in N^H$ and any $s,s' \in \prod S^G$,
 \begin{align}
  \rnd{s_{f_1^{-1}\rnd{j}} = s_{f_1^{-1}\rnd{j}}'}&
  \rightarrow
  \rnd{f_2\rnd{s}_j = f_2\rnd{s'}_j}\label{eq:preservation_of_strategy_tuples}\\
  \rnd{s \succsim_{f_1^{-1}\rnd{j}}^G s'}&
  \arrow
  \rnd{f_2\rnd{s} \succsim_j^H f_2\rnd{s'}}\label{eq:preservation_of_complete_preorders}
 \end{align}
 (if any player which $f_1$ sends to $j$ takes the same strategy in $s$ and $s'$,
 $j$ takes the same strategy in strategy profiles to which $f_2$ sends $s$ and $s'$;
 if $\arrow$ is $\rightarrow$ (resp. $\leftarrow$, $\leftrightarrow$), $f_2$ is an order-preserving map (resp. order-reflecting map, order-embedding map) from $\rnd{\prod S^G,\succsim_{f_1^{-1}\rnd{j}}^G}$ to $\rnd{\prod S^H,\succsim_j^H}$).
 \item
 For
 any composable $f,g \in \mor{\Gam}$,
 $g f = \rnd{g_1 f_1,g_2 f_2}$.\footnote
 {
  $g_1 f_1$ is the composite map of $f_1$ and $g_1$,
  and
  $g_2 f_2$ is the composite map of $f_2$ and $g_2$.
 }
 \item
 For any $G \in \ob{\Gam}$,
 $\id_{G} = \rnd{\id_{N^G},\id_{\prod S^G}}$.\footnote
 {
  $\id_{N^G}$ is the identity map on $N^G$,
  and
  $\id_{\prod S^G}$ is the identity map on $\prod S^G$.
 }
\end{enumerate}

Proposition \ref{prop:gam} shows that
$\Gam$ is well-defined.
\begin{proposition}\label{prop:gam}
 $\Gam$ is well-defined.
\end{proposition}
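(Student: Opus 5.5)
To prove Proposition \ref{prop:gam}, I would check the category axioms one at a time. Associativity and the unit law hold automatically, because composition in $\Gam$ is componentwise composition of maps and identities are pairs of identity maps, and both laws already hold for maps in $\Set$. The substantive work is closure: I must show that each $\id_G$ lies in $\hom{\Gam}{G}{G}$, and that for $f \in \hom{\Gam}{G}{H}$ and $g \in \hom{\Gam}{H}{K}$ the composite $gf = \rnd{g_1 f_1, g_2 f_2}$ lies in $\hom{\Gam}{G}{K}$. That is, conditions \eqref{eq:preservation_of_strategy_tuples} and \eqref{eq:preservation_of_complete_preorders} must hold for each of the three readings of $\arrow$.

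For identities, the definition gives $\id_{N^G}^{-1}\rnd{j} = \crly{j}$. Hence $s_{\crly{j}} = s'_{\crly{j}}$ gives $s_j = s'_j$, and $\succsim^G_{\crly{j}} = \succsim^G_j$. Both conditions then become trivial equivalences, so they hold for every choice of $\arrow$.

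For composition, fix $k \in N^K$ and set $M \coloneqq \rnd{g_1 f_1}^{-1}\rnd{k}$. The key observation is that $M = \bigcup_{j \in g_1^{-1}\rnd{k}} f_1^{-1}\rnd{j}$, and these sets are disjoint.
\begin{itemize}
\item \emph{Condition \eqref{eq:preservation_of_strategy_tuples}.} Suppose $s_M = s'_M$. Then for every $j \in g_1^{-1}\rnd{k}$ we have $s_{f_1^{-1}\rnd{j}} = s'_{f_1^{-1}\rnd{j}}$, so $f$'s condition gives $f_2\rnd{s}_j = f_2\rnd{s'}_j$. Thus $f_2\rnd{s}$ and $f_2\rnd{s'}$ agree on $g_1^{-1}\rnd{k}$, and $g$'s condition gives $g_2\rnd{f_2\rnd{s}}_k = g_2\rnd{f_2\rnd{s'}}_k$.
\item \emph{Condition \eqref{eq:preservation_of_complete_preorders}.} The relation $s \succsim^G_M s'$ is equivalent to $s \succsim^G_{f_1^{-1}\rnd{j}} s'$ for all $j \in g_1^{-1}\rnd{k}$, since $\graph$ of $\succsim^G_M$ is an intersection over $M$. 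By $f$'s condition, each of these relates via $\arrow$ to $f_2\rnd{s} \succsim^H_j f_2\rnd{s'}$. Taking the conjunction over $j$, $s \succsim^G_M s'$ relates via $\arrow$ to $f_2\rnd{s} \succsim^H_{g_1^{-1}\rnd{k}} f_2\rnd{s'}$. This works because each of $\rightarrow$, $\leftarrow$, $\leftrightarrow$ is preserved under taking conjunctions of families of instances. Chaining with $g$'s condition at $k$, and using transitivity of $\rightarrow$, $\leftarrow$ and $\leftrightarrow$, finishes the argument.
\end{itemize}

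The main obstacle is what happens when preimages are empty, because it threatens this chain. If $g_1^{-1}\rnd{k} = \emptyset$, then $\succsim^H_\emptyset$ is the full relation on $\prod S^H$. Condition \eqref{eq:preservation_of_complete_preorders} for $g$ then forces a statement about $\succsim_k^K$ directly, and $M$ is empty as well, so that case is consistent. The delicate case is $g_1^{-1}\rnd{k} \neq \emptyset$ but $f_1^{-1}\rnd{j} = \emptyset$ for some $j$. There the step-by-step argument above still goes through, because $f$'s condition at $j$ gives the correct implication with the full relation $\succsim^G_\emptyset$. I would check carefully that each stage uses only the literal conditions, including the convention that $\succsim_\emptyset$ is the full relation (an empty intersection taken inside $\prod S^G \times \prod S^G$) and that $s_\emptyset = s'_\emptyset$ holds trivially.
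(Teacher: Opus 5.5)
Your proposal is correct and follows essentially the same route as the paper. The paper also verifies (\ref{eq:preservation_of_strategy_tuples}) for $gf$ by splitting $\rnd{g_1 f_1}^{-1}\rnd{k}$ into the fibres $f_1^{-1}\rnd{j}$, $j \in g_1^{-1}\rnd{k}$, obtains (\ref{eq:preservation_of_complete_preorders}) from the same chain of equivalences with two $\arrow$ steps joined by transitivity, and treats identities, associativity and the unit law as immediate. Your separate treatment of empty preimages is sound but not a separate case, since the quantified chain covers it vacuously once $\succsim_\emptyset$ is read as the full relation on the set of strategy profiles.
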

\begin{proofsketch}
 The issue is to show that
 composite morphisms satisfy (\ref{eq:preservation_of_complete_preorders}).
 This statement is guaranteed by transitivity of $\arrow$.
\end{proofsketch}

\section{Presheaves on category of strategic games}\label{sec:presheaves}

We define
an $\Rel$-valued or $\Set$-valued presheaf on $\Gam$.
We then present conditions for this presheaf to be well-defined.

An $\Rel$-valued or $\Set$-valued presheaf on $\Gam$
can represent a solution concept or a welfare criterion in game theory.
A solution concept, such as the Nash equilibrium concept, sends each strategic game to a specific set of strategy profiles realized by players' decision making.
A welfare criterion, such as the Pareto efficiency criterion, sends each strategic game to a specific set of strategy profiles desirable for the society of players.
Thus,
we can represent such a concept as a presheaf on $\Gam$ valued in a category of sets.
We employ $\Rel$ and $\Set$ as categories of sets.

Throughout the remainder of this paper,
fix $C$ to denote any of $\Rel$ or $\Set$.

Let $F$ be a $C$-valued presheaf on $\Gam$ such that
for any $G \in \ob{\Gam}$,
$F\rnd{G} \subset \prod S^G$,
and
for any $G,H \in \ob{\Gam}$ and any $f \in \hom{\Gam}{G}{H}$,
$F\rnd{f} = f_2^{-1}|_{F\rnd{H}}^{F\rnd{G}}$.\footnote
{
 $
  F\rnd{f}
  = \rnd
    {
     F\rnd{H},
     F\rnd{G},
     \set{\rnd{t,s} \in F\rnd{H} \times F\rnd{G}}{f_2\rnd{s} = t}
    }
  $.
}
$F\rnd{G}$ is a set of strategy profiles in $G$,
and
$F\rnd{f}$ is the restriction of the inverse of $f_2$ to $F\rnd{G}$ and $F\rnd{H}$.
$F\rnd{f}$ is a relation but not necessarily a map.

Condition \ref{con:equivalence_rel} states that
if there exists $s \in F\rnd{G}$ and $u \in F\rnd{I}$ such that
$f_2\rnd{s} = t$ and $g_2\rnd{t} = u$,
then $t \in F\rnd{H}$.
\begin{condition}\label{con:equivalence_rel}
 For
 any $G,H,I \in \ob{\Gam}$,
 any $f \in \hom{\Gam}{G}{H}$
 and
 any $g \in \hom{\Gam}{H}{I}$,
 $
  f_2\rnd{F\rnd{G}} \cap g_2^{-1}\rnd{F\rnd{I}}
  \subset
  F\rnd{H}
 $.
\end{condition}

Condition \ref{con:sufficiency_rel} states that
if there exists $t \in F\rnd{H}$ such that
$f_2\rnd{s} = t$,
then $s \in F\rnd{G}$.
\begin{condition}\label{con:sufficiency_rel}
 For
 any $G,H \in \ob{\Gam}$
 and
 any $f \in \hom{\Gam}{G}{H}$,
 $f_2^{-1}\rnd{F\rnd{H}} \subset F\rnd{G}$.
\end{condition}

Condition \ref{con:necessity_set} states that
if $t \in F\rnd{H}$,
then there exists $s \in F\rnd{G}$ such that
$f_2\rnd{s} = t$.
\begin{condition}\label{con:necessity_set}
 For
 any $G,H \in \ob{\Gam}$
 and
 any $f \in \hom{\Gam}{G}{H}$,
 $F\rnd{H} \subset f_2\rnd{F\rnd{G}}$.
\end{condition}

Proposition \ref{prop:conditions_rel} states that
Condition \ref{con:sufficiency_rel} implies Condition \ref{con:equivalence_rel}.
\begin{proposition}\label{prop:conditions_rel}
 If Condition \ref{con:sufficiency_rel} holds,
 then Condition \ref{con:equivalence_rel} holds.
\end{proposition}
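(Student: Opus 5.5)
Assume Condition \ref{con:sufficiency_rel} and take arbitrary $G,H,I \in \ob{\Gam}$, $f \in \hom{\Gam}{G}{H}$ and $g \in \hom{\Gam}{H}{I}$. To verify Condition \ref{con:equivalence_rel}, I would fix an arbitrary $t \in f_2\rnd{F\rnd{G}} \cap g_2^{-1}\rnd{F\rnd{I}}$ and aim to show $t \in F\rnd{H}$. Membership in the intersection gives two facts, of which I expect to need only one: some $s \in F\rnd{G}$ satisfies $f_2\rnd{s} = t$, and $g_2\rnd{t} \in F\rnd{I}$.

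The key step is to apply Condition \ref{con:sufficiency_rel} not to $f$ but to the morphism $g \in \hom{\Gam}{H}{I}$. For this morphism the condition reads $g_2^{-1}\rnd{F\rnd{I}} \subset F\rnd{H}$. Since $t \in g_2^{-1}\rnd{F\rnd{I}}$, this gives $t \in F\rnd{H}$ at once. Because $t$ was arbitrary, $f_2\rnd{F\rnd{G}} \cap g_2^{-1}\rnd{F\rnd{I}} \subset g_2^{-1}\rnd{F\rnd{I}} \subset F\rnd{H}$, which is Condition \ref{con:equivalence_rel}.

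There is no real obstacle here. The only point needing care is to instantiate Condition \ref{con:sufficiency_rel} with the pair $\rnd{H,I}$ and the morphism $g$, rather than with $f$. This is legitimate because Condition \ref{con:sufficiency_rel} is quantified over all pairs of games and all morphisms between them.

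I would also note that the first component of the intersection, $f_2\rnd{F\rnd{G}}$, is never used. The argument in fact shows the stronger inclusion $g_2^{-1}\rnd{F\rnd{I}} \subset F\rnd{H}$.
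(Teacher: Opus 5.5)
Your proof is correct and is the same as the paper's: you apply Condition \ref{con:sufficiency_rel} to the morphism $g \in \hom{\Gam}{H}{I}$ to get $g_2^{-1}\rnd{F\rnd{I}} \subset F\rnd{H}$, and the intersection $f_2\rnd{F\rnd{G}} \cap g_2^{-1}\rnd{F\rnd{I}}$ lies inside that set. You are also right that the factor $f_2\rnd{F\rnd{G}}$ is never used.
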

\begin{proofsketch}
 By Condition \ref{con:sufficiency_rel},
 $g_2^{-1}\rnd{F\rnd{I}} \subset F\rnd{H}$ for $H$, $I$ and $g$ in Condition \ref{con:equivalence_rel}.
 Thus,
 Condition \ref{con:equivalence_rel} holds.
\end{proofsketch}

Proposition \ref{prop:presheaf_rel} presents an equivalent condition
for $F$ with $C = \Rel$ to be well-defined.
\begin{proposition}\label{prop:presheaf_rel}
 Suppose that
 $C = \Rel$.
 Then,
 $F$ is well-defined
 if and only if Condition \ref{con:equivalence_rel} holds.
\end{proposition}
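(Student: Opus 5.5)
Since $F\rnd{f} = f_2^{-1}|_{F\rnd{H}}^{F\rnd{G}}$ is by construction a relation between $F\rnd{H}$ and $F\rnd{G}$, $F$ sends each morphism of $\Gam$ to a morphism of $\Rel$ with the correct (reversed) domain and codomain. Hence well-definedness of $F$ reduces to the two functor axioms for a contravariant functor. These are preservation of identity, $F\rnd{\id_G} = \id_{F\rnd{G}}$, and preservation of composition, $F\rnd{g f} = F\rnd{f} F\rnd{g}$ for composable $f \in \hom{\Gam}{G}{H}$ and $g \in \hom{\Gam}{H}{I}$. I will dispose of the identity axiom first and then show that the composition axiom is exactly equivalent to Condition \ref{con:equivalence_rel}.

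The identity axiom is immediate. Since $\id_G = \rnd{\id_{N^G},\id_{\prod S^G}}$, the graph of $F\rnd{\id_G}$ is the set of $\rnd{t,s} \in F\rnd{G}^2$ with $s = t$, which is the graph of $\id_{F\rnd{G}}$. This requires no condition.

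For composition, I will compute both graphs explicitly.
\begin{itemize}
 \item The graph of $F\rnd{g f}$ is the set of $\rnd{u,s} \in F\rnd{I} \times F\rnd{G}$ with $g_2\rnd{f_2\rnd{s}} = u$.
 \item The graph of $F\rnd{f} F\rnd{g}$ is the set of $\rnd{u,s} \in F\rnd{I} \times F\rnd{G}$ for which some $t \in F\rnd{H}$ satisfies $g_2\rnd{t} = u$ and $f_2\rnd{s} = t$.
\end{itemize}
The inclusion of the second set in the first always holds, since the witness $t$ is forced to equal $f_2\rnd{s}$. The reverse inclusion holds if and only if, whenever $s \in F\rnd{G}$, $u \in F\rnd{I}$ and $g_2\rnd{f_2\rnd{s}} = u$, we have $f_2\rnd{s} \in F\rnd{H}$. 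Equivalently, every $t \in f_2\rnd{F\rnd{G}} \cap g_2^{-1}\rnd{F\rnd{I}}$ lies in $F\rnd{H}$, which is Condition \ref{con:equivalence_rel}. This proves sufficiency directly.

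For necessity, I assume $F$ is a functor and take $t \in f_2\rnd{F\rnd{G}} \cap g_2^{-1}\rnd{F\rnd{I}}$. I pick $s \in F\rnd{G}$ with $f_2\rnd{s} = t$ and set $u \coloneqq g_2\rnd{t} \in F\rnd{I}$. Then $\rnd{u,s}$ lies in the graph of $F\rnd{g f}$, so by functoriality it lies in the graph of $F\rnd{f} F\rnd{g}$. The resulting witness must be $t$, so $t \in F\rnd{H}$.

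There is no real obstacle; the only care needed is the bookkeeping of the order reversal in $\op{\Gam}$, so that the composite in $\Rel$ is $F\rnd{f} F\rnd{g}$ rather than $F\rnd{g} F\rnd{f}$. I will also note that the domains and codomains of the two composite relations agree, namely $F\rnd{I}$ and $F\rnd{G}$, so that equality of relations reduces to equality of graphs.
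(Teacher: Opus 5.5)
Your proposal is correct and follows essentially the same route as the paper. Identity preservation is immediate, and $\graph\rnd{F\rnd{f} F\rnd{g}} \subset \graph\rnd{F\rnd{g f}}$ always holds because the witness is forced to equal $f_2\rnd{s}$. The reverse inclusion under Condition \ref{con:equivalence_rel} (sufficiency) and the witness argument $t' = f_2\rnd{s} = t$ (necessity) match the paper's two parts, while your explicit check that $F\rnd{f} \in \hom{\Rel}{F\rnd{H}}{F\rnd{G}}$ and your care with the order reversal in $\op{\Gam}$ only make explicit what the paper leaves implicit.
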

\begin{proofsketch}
 If there is a case in which
 $s \in F\rnd{G}$,
 $f_2\rnd{s} \notin F\rnd{H}$,
 and
 $\rnd{g_2 f_2}\rnd{s} \in F\rnd{I}$,
 then
 $\rnd{\rnd{g_2 f_2}\rnd{s},s} \in \graph\rnd{F\rnd{g f}}$,
 and
 $\rnd{\rnd{g_2 f_2}\rnd{s},s} \notin \graph\rnd{F\rnd{f} F\rnd{g}}$;
 thus,
 $F$ is not well-defined.
 Condition \ref{con:equivalence_rel} excludes such a case.
\end{proofsketch}

Proposition \ref{prop:presheaf_set} presents a necessary condition
for $F$ with $C = \Set$ to be well-defined.
\begin{proposition}\label{prop:presheaf_set}
 Suppose that
 $C = \Set$.
 Then,
 $F$ is well-defined
 only if Condition \ref{con:necessity_set} holds.
\end{proposition}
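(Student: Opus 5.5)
We argue by contraposition, or equivalently by unpacking what well-definedness of $F$ demands when $C = \Set$. Suppose $F$ is well-defined as a $\Set$-valued presheaf on $\Gam$. Fix $G,H \in \ob{\Gam}$ and $f \in \hom{\Gam}{G}{H}$. Then $F\rnd{f} = f_2^{-1}|_{F\rnd{H}}^{F\rnd{G}}$ must be a morphism of $\Set$ from $F\rnd{H}$ to $F\rnd{G}$, i.e.\ a map. In particular it must be a left-total relation. The whole proof reduces to reading off what left-totality says for this specific relation.

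By the definition of restriction and inverse relation,
\begin{align*}
 \graph\rnd{F\rnd{f}}
 =
 \set{\rnd{t,s} \in F\rnd{H} \times F\rnd{G}}{f_2\rnd{s} = t}.
\end{align*}
Left-totality of $F\rnd{f}$ means that for every $t \in F\rnd{H}$ there is some $s \in F\rnd{G}$ with $\rnd{t,s} \in \graph\rnd{F\rnd{f}}$, that is, $f_2\rnd{s} = t$. This says exactly that $t \in f_2\rnd{F\rnd{G}}$. Since $t$ was arbitrary, $F\rnd{H} \subset f_2\rnd{F\rnd{G}}$. Since $G$, $H$ and $f$ were arbitrary, Condition \ref{con:necessity_set} holds.

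There is no real obstacle here: functoriality (preservation of composition and identities) is not even needed. Only the requirement that each $F\rnd{f}$ lie in $\hom{\Set}{F\rnd{H}}{F\rnd{G}} = \Map{F\rnd{H}}{F\rnd{G}}$ is used. The one point to state carefully is that a map in this paper is by definition a left-total and right-unique relation. Hence being a morphism of $\Set$ entails left-totality, and right-uniqueness, which would add a further necessary condition, can simply be discarded.
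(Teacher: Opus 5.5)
Your proof is correct and essentially the paper's own. The paper also uses only that $F\rnd{f} = f_2^{-1}|_{F\rnd{H}}^{F\rnd{G}}$ must be a map. It takes $s$ to be the image of $t$ under this map, which tacitly uses left-totality, whereas you invoke left-totality directly.
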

\begin{proofsketch}
 Suppose that
 $F$ is well-defined.
 Then,
 $F\rnd{f} = f_2^{-1}|_{F\rnd{H}}^{F\rnd{G}}$ is a map.
 Thus,
 if $t \in F\rnd{H}$,
 for $s \coloneqq f_2^{-1}|_{F\rnd{H}}^{F\rnd{G}}\rnd{t}$,
 $s \in F\rnd{G}$,
 and $f_2\rnd{s} = t$;
 thus,
 $t \in f_2\rnd{F\rnd{G}}$.
\end{proofsketch}

\section{Presheaves corresponding to Nash equilibrium concept}\label{sec:nash_equilibrium_concept}

We define a $C$-valued presheaf on $\Gam$ corresponding to the Nash equilibrium concept.\footnote
{
 For the definition of a Nash equilibrium,
 see \cite{Osborne1994}.
}
We then present an equivalent condition for this presheaf with $C = \Rel$ to be well-defined.
We also show that
this presheaf with $C = \Set$ is not well-defined.

For any strategic game $G$,
a \emph{Nash equilibrium} in $G$ is $s \in \prod S^G$ such that
for
any $i \in N^G$
and
any $s' \in \prod S^G$ with $s_{-i}' = s_{-i}$,
$s \succsim_i^G s'$.
A Nash equilibrium is a strategy profile such that
no player is better off by deviating to their arbitrary strategy.

Let $\NE$ be the $C$-valued presheave on $\Gam$ such that
for
any $G,H \in \ob{\Gam}$
and
any $f \in \hom{\Gam}{G}{H}$,
\begin{align*}
 \NE\rnd{G}&
 =
 \set
 {
  s \in \prod S^G
 }
 {
  \forall i \in N^G
  \forall s' \in \prod S^G
  \rnd{\rnd{s_{-i} = s_{-i}'} \rightarrow \rnd{s \succsim_i^G s'}}
 }\\
 \NE\rnd{f}&
 =
 f_2^{-1}|_{\NE\rnd{H}}^{\NE\rnd{G}}.
\end{align*}
$\NE\rnd{G}$ is the set of Nash equilibria in $G$,
and
$\NE\rnd{f}$ is the restriction of the inverse of $f_2$ to the sets of Nash equilibria in $G$ and $H$.

Proposition \ref{prop:ne_rel} presents an equivalent condition
for $\NE$ with $C = \Rel$ to be well-defined.
\begin{proposition}\label{prop:ne_rel}
 Suppose that
 $C = \Rel$.
 Then,
 $\NE$ is well-defined
 if and only if
 ${\arrow} \in \crly{\leftarrow,\leftrightarrow}$.
\end{proposition}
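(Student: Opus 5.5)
The plan is to prove both directions through Proposition \ref{prop:presheaf_rel}, which reduces well-definedness of $\NE$ with $C = \Rel$ to Condition \ref{con:equivalence_rel}.

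For the ``if'' direction, suppose ${\arrow} \in \crly{\leftarrow,\leftrightarrow}$. I will verify Condition \ref{con:sufficiency_rel} for $F = \NE$; Proposition \ref{prop:conditions_rel} then gives Condition \ref{con:equivalence_rel}, and Proposition \ref{prop:presheaf_rel} gives well-definedness.

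So fix $f \in \hom{\Gam}{G}{H}$ and $s \in \prod S^G$ with $f_2\rnd{s} \in \NE\rnd{H}$. Take $i \in N^G$ and $s' \in \prod S^G$ with $s'_{-i} = s_{-i}$, and put $j \coloneqq f_1\rnd{i}$.
\begin{enumerate}
 \item For every $k \in N^H$ with $k \neq j$, we have $i \notin f_1^{-1}\rnd{k}$. Hence $s_{f_1^{-1}\rnd{k}} = s'_{f_1^{-1}\rnd{k}}$, and (\ref{eq:preservation_of_strategy_tuples}) gives $f_2\rnd{s}_k = f_2\rnd{s'}_k$.
 \item Therefore $f_2\rnd{s'}_{-j} = f_2\rnd{s}_{-j}$.
 \item Since $f_2\rnd{s}$ is a Nash equilibrium in $H$, we get $f_2\rnd{s} \succsim_j^H f_2\rnd{s'}$.
 \item Because $\arrow$ contains the converse implication, (\ref{eq:preservation_of_complete_preorders}) yields $s \succsim_{f_1^{-1}\rnd{j}}^G s'$.
 \item Since $i \in f_1^{-1}\rnd{j}$, this gives $s \succsim_i^G s'$.
\end{enumerate}
As $i$ and $s'$ were arbitrary, $s \in \NE\rnd{G}$. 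The only delicate point is step 1, where (\ref{eq:preservation_of_strategy_tuples}) is used to show that a unilateral deviation in $G$ maps to a unilateral deviation in $H$.

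For the ``only if'' direction, it suffices to treat ${\arrow} = {\rightarrow}$, which is the only remaining case. I will build a counterexample to Condition \ref{con:equivalence_rel} and then invoke Proposition \ref{prop:presheaf_rel}. All three games have the single player $1$:
\begin{enumerate}
 \item $G$ has strategy set $S_1 = \crly{a}$.
 \item $H$ has strategy set $S_1 = \crly{a,b}$, with $b$ strictly preferred to $a$.
 \item $I$ has strategy set $S_1 = \crly{c}$.
\end{enumerate}
Here I identify each profile with its single strategy.

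Let $f = \rnd{\id,\ f_2}$ with $f_2\rnd{a} = a$, and let $g = \rnd{\id,\ g_2}$ with $g_2$ the constant map to $c$. Both satisfy (\ref{eq:preservation_of_strategy_tuples}) trivially. For (\ref{eq:preservation_of_complete_preorders}) with $\rightarrow$:
\begin{enumerate}
 \item For $f$, the condition reduces to $a \succsim a$, which holds.
 \item For $g$, every conclusion is $c \succsim c$, which always holds.
\end{enumerate}
So both are morphisms of $\Gam$.

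Now $a \in \NE\rnd{G}$ and $c = g_2\rnd{f_2\rnd{a}} \in \NE\rnd{I}$. However, $f_2\rnd{a} = a \notin \NE\rnd{H}$, because deviating to $b$ is profitable. Hence $a \in f_2\rnd{\NE\rnd{G}} \cap g_2^{-1}\rnd{\NE\rnd{I}}$ but $a \notin \NE\rnd{H}$, so Condition \ref{con:equivalence_rel} fails and $\NE$ is not well-defined.

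This example does not contradict the ``if'' direction. Under $\leftarrow$ or $\leftrightarrow$, the pair $g$ would not be a morphism, since $c \succsim c$ would force $a \succsim_1 b$.
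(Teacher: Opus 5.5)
Your proof is correct and essentially identical to the paper's: the ``if'' direction checks Condition~\ref{con:sufficiency_rel} by the same unilateral-deviation argument via (\ref{eq:preservation_of_strategy_tuples}) and then chains Propositions~\ref{prop:conditions_rel} and~\ref{prop:presheaf_rel}, and the ``only if'' direction breaks Condition~\ref{con:equivalence_rel} with the same inclusion $G \to H$. The only cosmetic difference is the choice of $I$. You use a one-profile game with $g_2$ constant, while the paper uses two indifferent strategies with $g_2 = \id$; both work because every profile of $I$ is a Nash equilibrium and the order-preserving requirement on $g_2$ holds trivially.
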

\begin{proofsketch}
 Suppose that
 ${\arrow} \in \crly{\leftarrow,\leftrightarrow}$.
 Then,
 if no player is better off by unilaterally deviating from $f_2\rnd{s}$ in $H$,
 then no player is better off by unilaterally deviating from $s$ in $G$.
 Therefore,
 $\NE$ satisfies Condition \ref{con:sufficiency_rel}.

 Suppose that
 ${\arrow} = {\rightarrow}$.
 Consider
 strategic games $G$, $H$ and $I$ with one player $1$ such that
 $G$ has one strategy $0$,
 $H$ has two strategies $0$ and $1$ with $0 \prec_1^H 1$,
 and
 $I$ has two strategies $0$ and $1$ with $0 \sim_1^I 1$
 and
 morphisms $f \in \hom{\Gam}{G}{H}$ and $g \in \hom{\Gam}{H}{I}$ such that
 $f_2$ is an inclusion map,
 and
 $g_2$ is an identity map.
 Then,
 $0$ in $H$ is the image of a Nash equilibrium $0$ in $G$ under $f_2$ and a preimage of a Nash equilibrium $0$ in $I$ under $g_2$.
 However,
 $0$ is not a Nash equilibrium in $H$.
 Therefore,
 $\NE$ does not satisfy Condition \ref{con:equivalence_rel}.
\end{proofsketch}

Proposition \ref{prop:ne_set} shows that
$\NE$ with $C = \Set$ is not well-defined.
\begin{proposition}\label{prop:ne_set}
 Suppose that
 $C = \Set$.
 Then,
 $\NE$ is not well-defined.
\end{proposition}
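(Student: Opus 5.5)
We will show that $\NE$ violates Condition \ref{con:necessity_set}; by Proposition \ref{prop:presheaf_set}, this means $\NE$ with $C = \Set$ is not well-defined. Since $\arrow$ can be any of $\rightarrow$, $\leftarrow$ or $\leftrightarrow$, we construct a single morphism that satisfies (\ref{eq:preservation_of_complete_preorders}) with $\leftrightarrow$. Such a morphism also satisfies the one-directional versions, so it is a morphism of $\Gam$ in all three cases. The idea is that an order-embedding $f_2$ need not be surjective. Hence $\NE\rnd{H}$ can contain a profile that is not the image of any Nash equilibrium in $G$.

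Concretely, let $G$ be the strategic game with the single player $1$ and $S_1^G = \crly{0}$. Then $\prod S^G$ has exactly one profile, which we call $0$, and $\succsim_1^G$ is the identity relation. Let $H$ be the strategic game with the single player $1$, $S_1^H = \crly{0,1}$ and $0 \sim_1^H 1$. Define $f_1 \coloneqq \id_{\crly{1}}$ and let $f_2$ send the unique profile of $G$ to the profile $0$ of $H$.

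We then check the two requirements for $\rnd{f_1,f_2} \in \hom{\Gam}{G}{H}$.
\begin{enumerate}
 \item Condition (\ref{eq:preservation_of_strategy_tuples}) holds trivially, because $\prod S^G$ is a singleton and so $f_2\rnd{s} = f_2\rnd{s'}$ for all $s,s'$.
 \item For (\ref{eq:preservation_of_complete_preorders}), the only pair is $s = s' = 0$. Both $0 \succsim_1^G 0$ and $0 \succsim_1^H 0$ hold by reflexivity, so the equivalence holds, and hence every choice of $\arrow$ is satisfied.
\end{enumerate}

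Next we compute the Nash equilibria. In $G$, the unique profile is trivially a Nash equilibrium, so $\NE\rnd{G} = \crly{0}$. In $H$, indifference gives $0 \succsim_1^H 1$ and $1 \succsim_1^H 0$, so no unilateral deviation is profitable and $\NE\rnd{H} = \crly{0,1}$. Therefore $f_2\rnd{\NE\rnd{G}} = \crly{0}$, which does not contain $1 \in \NE\rnd{H}$. This violates Condition \ref{con:necessity_set}.

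Equivalently, $\NE\rnd{f} = f_2^{-1}|_{\NE\rnd{H}}^{\NE\rnd{G}}$ relates $1$ to nothing. It is therefore not left-total, hence not a map, and so $\NE$ cannot be a $\Set$-valued presheaf. There is no real obstacle in this argument. The only point needing care is verifying that the constructed pair is a morphism under every choice of $\arrow$, which is why we use an order-embedding example rather than separate examples for each case.
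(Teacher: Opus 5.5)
Your proof is correct and follows essentially the paper's route (Lemma~\ref{lem:well-definedness_set}). Both arguments embed a one-player, one-strategy game into a one-player, two-strategy game, so that the non-image profile $1$ is a Nash equilibrium of $H$; this violates Condition~\ref{con:necessity_set}, and Proposition~\ref{prop:presheaf_set} gives the conclusion. The differences are cosmetic: you use indifference $0 \sim_1^H 1$ where the paper uses $1 \succ_1^H 0$, and you verify explicitly that the pair is a morphism for every choice of $\arrow$, which the paper leaves implicit.
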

\begin{proofsketch}
 Consider
 strategic games $G$ and $H$ with one player $1$ such that
 $G$ has one strategy $0$,
 and
 $H$ has two strategies $0$ and $1$ with $0 \prec_1^H 1$
 and
 a morphism $f \in \hom{\Gam}{G}{H}$ such that
 $f_2$ is an inclusion map.
 Then,
 $1$ is a Nash equilibrium in $H$.
 However,
 $1$ is not the image of a unique Nash equilibrium $0$ in $G$.
 Therefore,
 $\NE$ does not satisfy Condition \ref{con:necessity_set}.
\end{proofsketch}

\section{Presheaves corresponding to the Pareto efficiency criterion}\label{sec:pareto_efficiency}

We define a $C$-valued presheaf on $\Gam$ corresponding to the Pareto efficiency criterion.\footnote
{
 For the definition of Pareto efficiency,
 see \cite{Osborne1994}.
 Note that
 \cite{Osborne1994} use ``strongly Pareto efficient'' instead of ``Pareto efficient.''
}
We then present an equivalent condition for this presheaf with $C = \Rel$ to be well-defined.
We also show that
this presheaf with $C = \Set$ is not well-defined.

For any strategic game $G$ and any $s,s' \in \prod S^G$,
$s$ is \emph{Pareto superior} to $s'$
if and only if
for any $i \in N^G$,
$s' \succsim_i^G s$,
and
for some $i \in N^G$,
$s' \succ_i^G s$.
For any strategic game $G$ and any $s \in \prod S^G$,
$s$ is \emph{Pareto efficient} in $G$
if and only if
there exists no $s' \in \prod S^G$ that is Pareto superior to $s$.
A Pareto efficient strategy profile is a strategy profile such that
no joint deviation can make some player better off without making any player worse off.

Let $\PE$ be the $C$-valued presheaf on $\Gam$ such that
for
any $G,H \in \ob{\Gam}$
and
any $f \in \hom{\Gam}{G}{H}$,
\begin{align*}
 \PE\rnd{G}&
 =
 \set
 {
  s \in \prod S^G
 }
 {
  \neg
  \exists s' \in \prod S^G
  \rnd
  {
   \forall i \in N^G
   \rnd{s' \succsim_i^G s}
   \wedge
   \exists i \in N^G
   \rnd{s' \succ_i^G s}
  }
 }\\
 \PE\rnd{f}&
 =
 f_2^{-1}|_{\PE\rnd{H}}^{\PE\rnd{G}}.
\end{align*}
$\PE\rnd{G}$ is the set of Pareto efficient strategy profiles in $G$,
and
$\PE\rnd{f}$ is the restriction of the inverse of $f_2$ to the sets of Pareto efficient strategy profiles in $G$ and $H$.

Proposition \ref{prop:pe_rel} presents an equivalent condition
for $\PE$ with $C = \Rel$ to be well-defined.
\begin{proposition}\label{prop:pe_rel}
 Suppose that
 $C = \Rel$.
 Then,
 $\PE$ is well-defined
 if and only if
 ${\arrow} = {\leftrightarrow}$.
\end{proposition}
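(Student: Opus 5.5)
The plan is to prove sufficiency through Condition \ref{con:sufficiency_rel} and necessity by counterexamples that violate Condition \ref{con:equivalence_rel}, in the same way as for Proposition \ref{prop:ne_rel}.

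\emph{Sufficiency.} Suppose ${\arrow} = {\leftrightarrow}$. By Propositions \ref{prop:conditions_rel} and \ref{prop:presheaf_rel}, it suffices to verify Condition \ref{con:sufficiency_rel} for $\PE$. So take $f \in \hom{\Gam}{G}{H}$ and $s \in \prod S^G$ with $f_2\rnd{s} \in \PE\rnd{H}$, and suppose for contradiction that $s \notin \PE\rnd{G}$. Then some $s'$ satisfies $s' \succsim_i^G s$ for all $i \in N^G$ and $s' \succ_{i_0}^G s$ for some $i_0$.

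First, for every $j \in N^H$ we have $s' \succsim_{f_1^{-1}\rnd{j}}^G s$, since this holds for all players, including the case $f_1^{-1}\rnd{j} = \emptyset$. The $\rightarrow$ half of (\ref{eq:preservation_of_complete_preorders}) then gives $f_2\rnd{s'} \succsim_j^H f_2\rnd{s}$.

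Second, let $j_0 \coloneqq f_1\rnd{i_0}$. Since $s \succsim_{i_0}^G s'$ fails, the relation $s \succsim_{f_1^{-1}\rnd{j_0}}^G s'$ fails too. The $\leftarrow$ half of (\ref{eq:preservation_of_complete_preorders}) yields that $f_2\rnd{s} \succsim_{j_0}^H f_2\rnd{s'}$ fails, and completeness gives $f_2\rnd{s'} \succ_{j_0}^H f_2\rnd{s}$.

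Hence $f_2\rnd{s'}$ Pareto dominates $f_2\rnd{s}$ in $H$, which contradicts $f_2\rnd{s} \in \PE\rnd{H}$. This step is where both directions of $\leftrightarrow$ are genuinely needed.

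\emph{Necessity for ${\arrow} = {\rightarrow}$.} I reuse the example from the proof sketch of Proposition \ref{prop:ne_rel}. All three games have one player. $G$ has the single strategy $0$. $H$ has strategies $0,1$ with $0 \prec_1^H 1$. $I$ has strategies $0,1$ with $0 \sim_1^I 1$. Let $f_2$ be the inclusion and $g_2$ the identity; both are order-preserving. Then $0 \in \PE\rnd{G}$ and $0 \in \PE\rnd{I}$, but $0 \notin \PE\rnd{H}$. So Condition \ref{con:equivalence_rel} fails, and $\PE$ is not well-defined by Proposition \ref{prop:presheaf_rel}.

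\emph{Necessity for ${\arrow} = {\leftarrow}$.} This is the main obstacle. An order-reflecting $g_2$ cannot collapse the strict preference of $H$ for a single player, so the preference has to be neutralized with an extra player in $I$ instead.

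$G$ and $H$ are as before, with $f_2$ the inclusion; this is trivially order-reflecting because $G$ has only one profile. $I$ has players $1,2$ with $S_1^I = \crly{0,1}$ and $S_2^I = \crly{0}$. Its preferences are $0 \prec_1^I 1$ and $1 \prec_2^I 0$, comparing profiles by player $1$'s strategy. Take $g_1\rnd{1} = 1$ and let $g_2$ send $t$ to the profile in which player $1$ plays $t$ and player $2$ plays $0$.

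I then check the morphism conditions for $g$. Condition (\ref{eq:preservation_of_strategy_tuples}) holds because player $2$'s strategy is constant. Condition (\ref{eq:preservation_of_complete_preorders}) with $\leftarrow$ holds for $j = 1$ because the orders agree, and for $j = 2$ because $g_1^{-1}\rnd{2} = \emptyset$ makes $\succsim_\emptyset^H$ the full relation. Now $g_2\rnd{0}$ is Pareto efficient in $I$, since player $2$ strictly loses by moving to $1$, while $0 \notin \PE\rnd{H}$. So Condition \ref{con:equivalence_rel} fails again, and Proposition \ref{prop:presheaf_rel} completes the argument.
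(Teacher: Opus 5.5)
Your proof is correct and follows the paper's structure. Sufficiency goes through Condition~\ref{con:sufficiency_rel} and Propositions~\ref{prop:conditions_rel} and~\ref{prop:presheaf_rel}; your argument by contradiction is the contrapositive of the paper's direct case split on why $f_2\rnd{s'}$ cannot Pareto dominate $f_2\rnd{s}$. Necessity goes by violating Condition~\ref{con:equivalence_rel}, and for $\rightarrow$ you use exactly the one-player example of Lemma~\ref{lem:well-definedness_rel_order-preserving}.

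The one real difference is the witness for $\leftarrow$. The paper keeps $N=\crly{1,2}$ and a fixed two-profile space in $G$, $H$ and $I$, and takes $f$ and $g$ to be identity pairs. Player $2$ is indifferent in $H$ and opposed to player $1$ in $I$. You instead drop player $2$ from $H$ and reach it through a non-surjective $g_1$. Your example is valid. However, it relies on the vacuous readings $t_{\emptyset}=t'_{\emptyset}$ and $\succsim_{\emptyset}^H$ being the full relation, which the paper also uses implicitly elsewhere. The paper's example avoids these conventions, and it shows the failure even for morphisms that are bijective on both players and profiles.
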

\begin{proofsketch}
 Suppose that
 ${\arrow} = {\leftrightarrow}$.
 Let
 $G$ and $H$ be strategic games
 and
 $f \in \hom{\Gam}{G}{H}$.
 Let $s$ be a strategy profile in $G$ such that
 $f_2\rnd{s} = t$ for some Pareto efficient strategy profile $t$ in $H$.
 Let $s'$ be a strategy profile in $G$ and $t' \coloneqq f_2\rnd{s'}$.
 $t$ is Pareto efficient in $H$.
 Thus,
 $t'$ is not Pareto superior to $t$ in $H$.
 Note that
 ${\arrow} = {\leftrightarrow}$.
 Then,
 $s'$ is not Pareto superior to $s$ in $G$.
 Thus,
 $s$ is Pareto efficient in $G$.
 Therefore,
 $\PE$ satisfies Condition \ref{con:sufficiency_rel}.

 Suppose that
 ${\arrow} = {\rightarrow}$.
 The argument is the same as in the case in which ${\arrow} = {\rightarrow}$ in the proof sketch of Proposition \ref{prop:ne_rel}.

 Suppose that
 ${\arrow} = {\leftarrow}$.
 Consider
 strategic games $G$, $H$ and $I$ with two players $1$ and $2$ and two strategy profiles $s$ and $s'$ such that
 $s \sim_1^G s'$,
 $s \sim_2^G s'$,
 $s \prec_1^H s'$,
 $s \sim_2^H s'$,
 $s \prec_1^I s'$,
 and
 $s \succ_2^I s'$
 and
 morphisms $f \in \hom{\Gam}{G}{H}$ and $g \in \hom{\Gam}{H}{I}$ such that $f_1$, $f_2$, $g_1$ and $g_2$ are identity maps. 
 $s$ in $H$ is the image of Pareto efficient $s$ in $G$ and a preimage of Pareto efficient $s$ in $I$.
 However,
 $s$ is not Pareto efficient in $H$.
 Therefore,
 $\PE$ does not satisfy Condition \ref{con:equivalence_rel}.
\end{proofsketch}

Proposition \ref{prop:pe_set} shows that
$\PE$ with $C = \Set$ is not well-defined.
\begin{proposition}\label{prop:pe_set}
 Suppose that
 $C = \Set$.
 Then,
 $\PE$ is not well-defined.
\end{proposition}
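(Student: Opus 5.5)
We plan to follow the same route as Proposition \ref{prop:ne_set}: by Proposition \ref{prop:presheaf_set}, when $C = \Set$ the presheaf $\PE$ is well-defined only if Condition \ref{con:necessity_set} holds. It therefore suffices to exhibit strategic games $G,H$ and a morphism $f \in \hom{\Gam}{G}{H}$ with $\PE\rnd{H} \not\subset f_2\rnd{\PE\rnd{G}}$. The construction must be a morphism for every choice of $\arrow$, including $\leftrightarrow$, since $\arrow$ is fixed but arbitrary.

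The candidate is the example from the proof sketch of Proposition \ref{prop:ne_set}. Let $G$ have the single player $1$ and the single strategy $0$, with the trivial preference. Let $H$ have the single player $1$, strategies $0$ and $1$, and $0 \prec_1^H 1$. Take $f_1 = \id_{\crly{1}}$, and let $f_2$ send the unique profile of $G$ to the profile of $H$ in which player $1$ plays $0$.

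We then verify that $f$ is a morphism for each $\arrow$. Since $f_1^{-1}\rnd{1} = \crly{1}$ and $\prod S^G$ is a singleton, the only pair to check is $s = s'$. Then (\ref{eq:preservation_of_strategy_tuples}) holds trivially. Also (\ref{eq:preservation_of_complete_preorders}) holds in every direction, because both $s \succsim_1^G s$ and $f_2\rnd{s} \succsim_1^H f_2\rnd{s}$ hold by reflexivity. So $f$ is a morphism whether $\arrow$ is $\rightarrow$, $\leftarrow$ or $\leftrightarrow$.

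Next we compute the Pareto efficient sets. In $G$ the unique profile is trivially Pareto efficient, so $\PE\rnd{G} = \prod S^G$ and $f_2\rnd{\PE\rnd{G}}$ is the profile "$0$". In $H$, the profile "$0$" is dominated by "$1$", since with one player $1 \succ_1^H 0$. Nothing dominates "$1$", because $0 \prec_1^H 1$. Hence $\PE\rnd{H}$ is the profile "$1$", which is not in $f_2\rnd{\PE\rnd{G}}$. This violates Condition \ref{con:necessity_set}, so $\PE$ is not well-defined.

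Equivalently, $\PE\rnd{f}$ is the empty relation from a nonempty set, which is not left-total and hence not a map. There is no real obstacle here. The only care needed is to use the set-builder definition of $\PE\rnd{G}$ directly, where the dominating profile $s'$ satisfies $s' \succsim_i^G s$, rather than the verbal definition of "Pareto superior", whose stated orientation appears reversed. We also need to confirm that morphism-hood holds uniformly over all three choices of $\arrow$.
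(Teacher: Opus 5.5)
Your proposal is correct and is essentially the paper's proof. The paper obtains this from Lemma~\ref{lem:well-definedness_set}, which uses the same one-player games $G$ (one strategy) and $H$ (two strategies with $1 \succ_1^H 0$) and the map $f_2$ sending $G$'s profile to $0$ to violate Condition~\ref{con:necessity_set}, and then applies Proposition~\ref{prop:presheaf_set}; your explicit check that $f$ is a morphism for every choice of $\arrow$, and your choice to use the set-builder definition of $\PE$ (the verbal definition of ``Pareto superior'' is indeed reversed), are sound details the paper leaves implicit.
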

\begin{proofsketch}
 This proof sketch is the same as that of Proposition \ref{prop:ne_set}.
\end{proofsketch}

\section{Conclusion}\label{sec:conclusion}

We defined a category of strategic games, $\Gam$.
In $\Gam$,
a morphism is a pair of a map between sets of players and a map between sets of strategy profiles with specific properties.
The following three cases were considered for the maps between sets of strategy profiles:
they are order-preserving, order-reflecting or order-embedding with respect to each player's preference relation.
We showed that
$\Gam$ is well-defined.

We defined a presheaf on $\Gam$ valued in a category of sets.
This presheaf sends each strategic game to a set of strategy profiles.
The following two cases were considered for the morphisms of the category of sets:
they are relations or maps.
We presented an equivalent condition and a sufficient condition for the presheaf valued in the category of sets and relations to be well-defined.
We also presented a necessary condition for the presheaf valued in the category of sets and maps to be well-defined.

We defined presheaves, $\NE$ (resp. $\PE$), on $\Gam$ valued in a category of sets corresponding to the Nash equilibrium concept (resp. the Pareto efficiency criterion).
$\NE$ (resp. $\PE$) send each strategic game to the set of Nash equilibria (resp. Pareto efficient strategy profiles).
The following two cases were considered for the morphisms of the category of sets:
they are relations or maps.
We presented equivalent conditions for $\NE$ and $\PE$ valued in the category of sets and relations to be well-defined.
We also showed that
$\NE$ and $\PE$ valued in the category of sets and maps are not well-defined.
These results are summarized in Table \ref{tab:well-definedness_of_nash_equilibrium_concept_and_pareto_efficiency}.
``$\Rel$'' (resp. ``$\Set$'') indicates that
$\NE$ and $\PE$ are valued in the category of sets in which the morphisms are relations (resp. maps).
``$\rightarrow$'' (resp. ``$\leftarrow$,'' `$\leftrightarrow$'') indicates that
the maps between sets of strategy profiles in $\Gam$ are order-preserving (resp. order-reflecting, order-embedding).
``Well'' (resp. ``Not well'') indicates that
$\NE$ and $\PE$ are well-defined (resp. not well-defined).
\begin{table}[htbp]
 \begin{tabular}{c|c|c|c|}
  \multicolumn{1}{c}{} & \multicolumn{1}{c}{$\rightarrow$} & \multicolumn{1}{c}{$\leftarrow$} & \multicolumn{1}{c}{$\leftrightarrow$} \\ \cline{2-4}
  $\Rel$               & Not well                          & Well                             & Well                                  \\ \cline{2-4}
  $\Set$               & Not well                          & Not well                         & Not well                              \\ \cline{2-4}
 \end{tabular}
 \begin{tabular}{c|c|c|c|}
  \multicolumn{1}{c}{} & \multicolumn{1}{c}{$\rightarrow$} & \multicolumn{1}{c}{$\leftarrow$} & \multicolumn{1}{c}{$\leftrightarrow$} \\ \cline{2-4}
  $\Rel$               & Not well                          & Not well                         & Well                                  \\ \cline{2-4}
  $\Set$               & Not well                          & Not well                         & Not well                              \\ \cline{2-4}
 \end{tabular}
 \caption{Well-definedness of presheaves corresponding to the Nash equilibrium concept (left) and the Pareto efficiency criterion (right)}\label{tab:well-definedness_of_nash_equilibrium_concept_and_pareto_efficiency}
\end{table}
When we consider Nash equilibria,
we should use
$\NE$ valued in $\Rel$ on $\Gam$ with $\leftarrow$ or $\leftrightarrow$.
When we consider Pareto efficient strategy profiles,
we should use $\PE$ valued in $\Rel$ on $\Gam$ with $\leftrightarrow$.

\newpage
\appendix
\section*{Appendix}

\begin{proof}[Proof of Proposition \ref{prop:gam}]
 Show that
 composite morphisms satisfy \ref{it:morphism}. in the definition of $\Gam$.
 Let
 $G,H,I \in \ob{\Gam}$,
 $f \in \hom{\Gam}{G}{H}$
 and $g \in \hom{\Gam}{H}{I}$.
 Let $k \in N^I$ and $s,s' \in \prod S^G$.

 Show that
 $g f$ satisfies (\ref{eq:preservation_of_strategy_tuples}).
 Suppose that
 $s_{\rnd{g_1 f_1}^{-1}\rnd{k}} = s_{\rnd{g_1 f_1}^{-1}\rnd{k}}'$.
 Let $j \in g_1^{-1}\rnd{k}$.
 Let $i \in f_1^{-1}\rnd{j}$.
 Then,
 $f_1\rnd{i} = j$.
 Thus,
 $\rnd{g_1 f_1}\rnd{i} = g_1\rnd{f_1\rnd{i}}= g_1\rnd{j} = k$.
 Thus,
 $i \in \rnd{g_1 f_1}^{-1}\rnd{k}$.
 Thus,
 $s_i = s_i'$.
 Thus,
 $s_{f_1^{-1}\rnd{j}} = s_{f_1^{-1}\rnd{j}}'$.
 Thus,
 $f_2\rnd{s}_j = f_2\rnd{s'}_j$.
 Thus, 
 $f_2\rnd{s}_{g_1^{-1}\rnd{k}} = f_2\rnd{s'}_{g_1^{-1}\rnd{k}}$.
 Thus,
 $g_2 \rnd{f_2\rnd{s}}_k = g_2\rnd{f_2\rnd{s'}}_k$.
 Thus,
 $\rnd{g_2 f_2}\rnd{s}_k = \rnd{g_2 f_2}\rnd{s}_k$.

 Show that
 $g f$ satisfy (\ref{eq:preservation_of_complete_preorders}).
 \begin{align*}
  \rnd{s \succsim_{\rnd{g_1 f_1}^{-1}\rnd{k}}^G s'}&
  \leftrightarrow
  \forall i \in \rnd{g_1 f_1}^{-1}\rnd{k} \rnd{s \succsim_i^G s'}\\&
  \leftrightarrow
  \forall j \in g_1^{-1}\rnd{k} \forall i \in f_1^{-1}\rnd{j} \rnd{s \succsim_i^G s'}\\&
  \leftrightarrow
  \forall j \in g_1^{-1}\rnd{k} \rnd{s \succsim_{f_1^{-1}\rnd{j}}^G s'}\\&
  \arrow
  \forall j \in g_1^{-1}\rnd{k} \rnd{f_2\rnd{s} \succsim_j^H f_2\rnd{s'}}\\&
  \leftrightarrow
  \rnd{f_2\rnd{s} \succsim_{g_1^{-1}\rnd{k}}^H f_2\rnd{s'}}\\&
  \arrow
  \rnd{g_2\rnd{f_2\rnd{s}} \succsim_k^I g_2\rnd{f_2\rnd{s'}}}\\&
  \leftrightarrow
  \rnd{\rnd{g_2 f_2}\rnd{s} \succsim_k^I \rnd{g_2 f_2}\rnd{s'}}.
 \end{align*}
 Note that
 $\arrow$ is transitive.
 Then,
 $
  \rnd{s \succsim_{\rnd{g_1 f_1}^{-1}\rnd{k}}^G s'}
  \arrow
  \rnd{\rnd{g_2 f_2}\rnd{s} \succsim_k^I \rnd{g_2 f_2}\rnd{s'}}
 $.

 Obviously,
 identity morphisms satisfy \ref{it:morphism}. in the definition of $\Gam$.

 The associative law (resp. the unit law) follows the associative law (resp. the unit law) of maps.
\end{proof}

\begin{proof}[Proof of Proposition \ref{prop:conditions_rel}]
 Suppose that
 Condition \ref{con:sufficiency_rel} holds.
 Let
 $G,H,I \in \ob{\Gam}$,
 $f \in \hom{\Gam}{G}{H}$
 and
 $g \in \hom{\Gam}{H}{I}$.
 By Condition \ref{con:sufficiency_rel},
 $g_2^{-1}\rnd{F\rnd{I}} \subset F\rnd{H}$.
 Thus,
 $f_2\rnd{F\rnd{G}} \cap g_2^{-1}\rnd{F\rnd{I}} \subset F\rnd{H}$.
 Thus,
 Condition \ref{con:equivalence_rel} holds.
\end{proof}

\begin{proof}[Proof of Proposition \ref{prop:presheaf_rel}]
 \subproof{``if'' part}
 Suppose that
 Condition \ref{con:equivalence_rel} holds.
 Obviously,
 $F$ preserves identity.
 Show that
 $F$ preserves composition.
 Let
 $G,H,I \in \ob{\Gam}$,
 $f \in \hom{\Gam}{G}{H}$
 and
 $g \in \hom{\Gam}{H}{I}$.
 $\dom\rnd{F\rnd{g f}} = F\rnd{I} = \dom\rnd{F\rnd{f} F\rnd{g}}$,
 and
 $\cod\rnd{F\rnd{g f}} = F\rnd{G} = \cod\rnd{F\rnd{f} F\rnd{g}}$.
 Let $\rnd{u,s} \in \graph\rnd{F\rnd{g f}}$.
 Let $t = f_2\rnd{s}$.
 Then,
 $u = g_2\rnd{f_2\rnd{s}} = g_2\rnd{t}$.
 Note that
 $s \in F\rnd{G}$ and $u \in F\rnd{I}$.
 Then,
 $t \in f_2\rnd{F\rnd{G}} \cap g_2^{-1}\rnd{F\rnd{I}}$.
 Thus,
 by Condition \ref{con:equivalence_rel},
 $t \in F\rnd{H}$.
 Thus,
 $\rnd{u,s} \in \graph\rnd{F\rnd{s} F\rnd{g}}$.
 Obviously,
 $
  \graph\rnd{F\rnd{f} F\rnd{g}}
  \subset
  \graph\rnd{F\rnd{g f}}
 $.
 Thus,
 $
  \graph\rnd{F\rnd{g f}}
  =
  \graph\rnd{F\rnd{f} F\rnd{g}}
 $.
 Thus,
 $F\rnd{g f} = F\rnd{f} F\rnd{g}$.

 \subproof{``only if'' part}
 Suppose that
 $F$ is well-defined.
 Let
 $G,H,I \in \ob{\Gam}$,
 $f \in \hom{\Gam}{G}{H}$
 and
 $g \in \hom{\Gam}{H}{I}$.
 Let $t \in f_2\rnd{F\rnd{G}} \cap g_2^{-1}\rnd{F\rnd{I}}$.
 Then,
 there exist $s \in F\rnd{G}$ and $u \in F\rnd{I}$ such that
 $f_2\rnd{s} = t$
 and
 $g_2\rnd{t} = u$.
 Thus,
 $g_2\rnd{f_2\rnd{s}} = g_2\rnd{t} = u$.
 Thus,
 $\rnd{u,s} \in \graph\rnd{F\rnd{g f}}$.
 Note that
 by the supposition,
 $\graph\rnd{F\rnd{g f}} = \graph\rnd{F\rnd{f} F\rnd{g}}$.
 Then,
 $\rnd{u,s} \in \graph\rnd{F\rnd{f} F\rnd{g}}$.
 Thus,
 there exists $t' \in F\rnd{H}$ such that
 $f_2\rnd{s} = t'$
 and
 $g_2\rnd{t'} = u$.
 Note that
 $t = f_2\rnd{s} = t'$.
 Then,
 $t \in F\rnd{H}$.
 Thus,
 Condition \ref{con:equivalence_rel} holds.
\end{proof}

\begin{proof}[Proof of Proposition \ref{prop:presheaf_set}]
 Suppose that
 $F$ is well-defined.
 Let
 $G,H \in \ob{\Gam}$
 and
 $f \in \hom{\Gam}{G}{H}$.
 Let $t \in F\rnd{H}$.
 By the supposition,
 $F\rnd{f} = f_2^{-1}|_{F\rnd{H}}^{F\rnd{G}}$ is a map.
 Let $s = f_2^{-1}|_{F\rnd{H}}^{F\rnd{G}}\rnd{t}$.
 Then,
 $t = f_2\rnd{s}$.
 Note that
 $s \in F\rnd{G}$.
 Then,
 $t \in f_2\rnd{F\rnd{G}}$.
 Thus,
 Condition \ref{con:necessity_set} holds.
\end{proof}

\begin{lemma}\label{lem:well-definedness_rel_order-preserving}
 Suppose that
 $C = \Rel$
 and
 ${\arrow} = {\rightarrow}$.
 Let $F \in \crly{\NE,\PE}$.
 Then,
 $F$ is not well-defined.
\end{lemma}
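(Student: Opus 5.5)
By Proposition \ref{prop:presheaf_rel}, it suffices to exhibit games $G,H,I$ and morphisms $f \in \hom{\Gam}{G}{H}$, $g \in \hom{\Gam}{H}{I}$ for which Condition \ref{con:equivalence_rel} fails for $F$. That is, we need some $t \in f_2\rnd{F\rnd{G}} \cap g_2^{-1}\rnd{F\rnd{I}}$ with $t \notin F\rnd{H}$. I will use the single-player construction from the proof sketch of Proposition \ref{prop:ne_rel}. It works simultaneously for $\NE$ and $\PE$, because with one player both concepts reduce to ``$s$ is a $\succsim_1$-maximal profile'': every profile is a unilateral deviation, and Pareto superiority is strict preference of the single player.

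The construction is as follows. Let $N^G = N^H = N^I = \crly{1}$, with $S^G_1 = \crly{0}$ and $S^H_1 = S^I_1 = \crly{0,1}$. Identify profiles with strategies. Set $0 \prec_1^H 1$, and let $\succsim_1^I$ be the total relation, so that $0 \sim_1^I 1$. Let $f_1 = \id_{\crly{1}}$ and $f_2(0)=0$, and let $g_1 = \id_{\crly{1}}$ and $g_2 = \id$.

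We must check that $f$ and $g$ are morphisms when ${\arrow} = {\rightarrow}$.
\begin{enumerate}
 \item Condition (\ref{eq:preservation_of_strategy_tuples}) is trivial for $f$ because $\prod S^G$ is a singleton, and trivial for $g$ because $g_2$ is the identity.
 \item For (\ref{eq:preservation_of_complete_preorders}), $f$ only needs $0 \succsim_1^H 0$, which holds by reflexivity.
 \item For $g$ we need $t \succsim_1^H t' \rightarrow t \succsim_1^I t'$, which holds since $\succsim_1^I$ relates every pair.
\end{enumerate}

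Next we identify the sets. $F\rnd{G} = \crly{0}$, since a singleton profile set is trivially maximal. $F\rnd{I} = \crly{0,1}$, since all profiles are indifferent. $F\rnd{H} = \crly{1}$: the profile $0$ is beaten by $1$, which is both a unilateral deviation and a Pareto-superior profile. So $t = 0$ satisfies $t = f_2(0) \in f_2\rnd{F\rnd{G}}$ and $g_2(t) = 0 \in F\rnd{I}$, but $t \notin F\rnd{H}$. Condition \ref{con:equivalence_rel} therefore fails, and by Proposition \ref{prop:presheaf_rel} $F$ is not well-defined.

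No step is a real obstacle. The only care needed is the formal bookkeeping: profiles are families indexed by $\crly{1}$, so $s_{-1}$ is the empty family, which makes every pair of profiles unilateral deviations of each other. This is what gives the clean identification of $\NE$ and $\PE$ in the one-player case.
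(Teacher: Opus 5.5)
Your proof is correct and is essentially the paper's own proof: the same one-player games $G,H,I$ with $0 \prec_1^H 1$ and $0 \sim_1^I 1$, the same $f_2(0)=0$ and $g_2=\id$, and the same failure of Condition \ref{con:equivalence_rel} at the profile $0$, concluded via Proposition \ref{prop:presheaf_rel}. Your explicit check that $f$ and $g$ satisfy the morphism conditions under $\rightarrow$, which the paper leaves implicit, is a useful addition.
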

\begin{proof}
 Let $G,H,I \in \ob{\Gam}$ such that
 $N^G = N^H = N^I = \crly{1}$,
 $S_1^G = \crly{0}$,
 $S_1^H = S_1^I = \crly{0,1}$,
 $1 \succ_1^H 0$,
 and
 $0 \sim_1^I 1$.
 Let
 $f \in \hom{\Gam}{G}{H}$ and $g \in \hom{\Gam}{H}{I}$ such that
 $f_2\rnd{0} = 0$,
 and
 $g_2 = \id_{\prod S^H}$.
 $0 \in F\rnd{G}$.
 Note that
 $f_2\rnd{0} = 0$.
 Then,
 $0 \in f_2\rnd{F\rnd{G}}$.
 $0 \in F\rnd{I}$.
 Note that
 $g_2\rnd{0} = 0$.
 Then,
 $0 \in g_2^{-1}\rnd{F\rnd{I}}$.
 Thus,
 $0 \in f_2\rnd{F\rnd{G}} \cap g_2^{-1}\rnd{F\rnd{I}}$.
 However,
 $0 \notin F\rnd{H}$.
 Thus,
 $F$ does not satisfy Condition \ref{con:equivalence_rel}.
 Thus,
 by Proposition \ref{prop:presheaf_rel},
 $F$ is not well-defined.
\end{proof}

\begin{lemma}\label{lem:well-definedness_set}
 Suppose that
 $C = \Set$.
 Let $F \in \crly{\NE,\PE}$.
 Then,
 $F$ is not well-defined.
\end{lemma}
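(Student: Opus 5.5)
The plan is to invoke Proposition \ref{prop:presheaf_set}: if $F$ were well-defined with $C = \Set$, then Condition \ref{con:necessity_set} would hold. It therefore suffices to exhibit one morphism $f \in \hom{\Gam}{G}{H}$ with $F\rnd{H} \not\subset f_2\rnd{F\rnd{G}}$. The same example should serve for $F = \NE$ and $F = \PE$, and for every choice of $\arrow$.

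The construction follows the proof sketch of Proposition \ref{prop:ne_set}. Let $N^G = N^H = \crly{1}$, $S_1^G = \crly{0}$, $S_1^H = \crly{0,1}$, and $1 \succ_1^H 0$. Identify strategy profiles with the strategy of player $1$. Let $f_1 = \id_{\crly{1}}$ and let $f_2$ be the inclusion $f_2\rnd{0} = 0$.

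The only real check is that $\rnd{f_1,f_2}$ is a morphism for every $\arrow$. Condition (\ref{eq:preservation_of_strategy_tuples}) is trivial since $\prod S^G$ is a singleton. For (\ref{eq:preservation_of_complete_preorders}), the only pair is $s = s' = 0$. Both $0 \succsim_1^G 0$ and $0 \succsim_1^H 0$ hold by reflexivity, so all three of $\rightarrow$, $\leftarrow$ and $\leftrightarrow$ hold. This is the step to state carefully, because it is what lets a single example cover all three cases of $\arrow$.

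Now compute the sets. In $H$, the profile $1$ is a Nash equilibrium: the only unilateral deviation is to $0$, and $1 \succsim_1^H 0$. It is also Pareto efficient, since no profile is strictly better for player $1$. So $1 \in F\rnd{H}$ for both choices of $F$. On the other hand, $f_2\rnd{F\rnd{G}} \subset f_2\rnd{\prod S^G} = \crly{0}$, which does not contain $1$. Hence Condition \ref{con:necessity_set} fails, and by Proposition \ref{prop:presheaf_set}, $F$ is not well-defined. No step here is a genuine obstacle; the argument is a direct counterexample.
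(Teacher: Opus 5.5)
Your proposal is correct and essentially reproduces the paper's own proof: the same one-player games $G$ and $H$ with $1 \succ_1^H 0$, the same inclusion $f_2$, and the same appeal to Proposition \ref{prop:presheaf_set} through the failure of Condition \ref{con:necessity_set}. Your explicit check that $\rnd{f_1,f_2}$ satisfies (\ref{eq:preservation_of_strategy_tuples}) and (\ref{eq:preservation_of_complete_preorders}) for every choice of $\arrow$ is a detail the paper leaves implicit, and your check is correct.
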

\begin{proof}
 Let $G,H \in \ob{\Gam}$ such that
 $N^G = N^H = \crly{1}$,
 $S_1^G = \crly{0}$,
 $S_1^H = \crly{0,1}$,
 and
 $1 \succ_1^H 0$.
 Let $f \in \hom{\Gam}{G}{H}$ such that
 $f_2\rnd{0} = 0$.
 $1 \in F\rnd{H}$.
 Note that
 $f_2\rnd{F\rnd{G}} \subset f_2\rnd{\crly{0}} = \crly{0}$.
 Then,
 $1 \notin f_2\rnd{F\rnd{G}}$,
 Thus,
 $F$ does not satisfy Condition \ref{con:necessity_set}.
 Thus,
 by Proposition \ref{prop:presheaf_set},
 $F$ is not well-defined.
\end{proof}

\begin{proof}[Proof of Proposition \ref{prop:ne_rel}]
 \subproof{``if'' part}
 Suppose that
 ${\arrow} \in \crly{\leftarrow,\leftrightarrow}$.
 Let
 $G,H \in \ob{\Gam}$
 and
 $f \in \hom{\Gam}{G}{H}$.
 Let $s \in f_2^{-1}\rnd{\NE\rnd{H}}$.
 Let $i \in N$.
 Let $s' \in \prod S^G$ such that
 $s_{-i}' = s_{-i}$.
 Let $j = f_1\rnd{i}$.
 Then,
 for any $j' \in J \setminus \crly{j}$,
 $s_{f_1^{-1}\rnd{j'}} = s_{f_1^{-1}\rnd{j'}}'$,
 and
 thus,
 $f_2\rnd{s}_{j'} = f_2\rnd{s'}_{j'}$.
 Thus,
 $f_2\rnd{s}_{-j} = f_2\rnd{s'}_{-j}$.
 Note that
 $f_2\rnd{s} \in \NE\rnd{H}$.
 Then,
 $f_2\rnd{s} \succsim_{j}^H f_2\rnd{s'}$.
 Thus,
 $s \succsim_{f_1^{-1}\rnd{j}}^G s'$.
 Thus,
 $s \succsim_i^G s'$.
 Thus,
 $s \in \NE\rnd{G}$.
 Thus,
 $\NE$ satisfies Condition \ref{con:sufficiency_rel}.
 Thus,
 by Proposition \ref{prop:conditions_rel},
 $\NE$ satisfies Condition \ref{con:equivalence_rel}.
 Thus,
 by Proposition \ref{prop:presheaf_rel},
 $\NE$ is well-defined.

 \subproof{``only if'' part}
 Suppose that
 ${\arrow} = {\rightarrow}$.
 Then,
 by Lemma \ref{lem:well-definedness_rel_order-preserving},
 $\NE$ is not well-defined.
\end{proof}

\begin{proof}[Proof of Proposition \ref{prop:ne_set}]
 By Lemma \ref{lem:well-definedness_set},
 $\NE$ is not well-defined.
\end{proof}

\begin{proof}[Proof of Proposition \ref{prop:pe_rel}]
 \subproof{``if'' part}
 Suppose that
 ${\arrow} = {\leftrightarrow}$.
 Let
 $G,H \in \ob{\Gam}$,
 and $f \in \hom{\Gam}{G}{H}$.
 Let $s \in f_2^{-1}\rnd{\PE\rnd{H}}$.
 Then,
 there exists $t \in \PE\rnd{H}$ such that
 $f_2\rnd{s} = t$.
 Let $s' \in \prod S^G$.
 Let $t' = f_2\rnd{s'}$.
 Because $t$ is Pareto efficient in $H$,
 $t'$ is not Pareto superior to $t$ in $H$.
 Thus,
 for some $j \in N^H$,
 $t \succ_j^H t'$,
 or
 for any $j \in N^H$,
 $t \succsim_i^H t'$.
 \begin{enumerate}
  \item
  Consider the case in which
  for some $j \in N^H$,
  $t \succ_j^H t'$.
  Note that
  \begin{align*}
   \rnd{t \succ_j^H t'}&
   \leftrightarrow
   \neg \rnd{t' \succsim_j^H t}\\&
   \leftrightarrow
   \neg \rnd{s' \succsim_{f_1^{-1}\rnd{j}}^G s}\\&
   \leftrightarrow
   \neg \forall i \in f_1^{-1}\rnd{j} \rnd{s' \succsim_i^G s}\\&
   \leftrightarrow
   \exists i \in f^{-1}\rnd{j} \neg \rnd{s' \succsim_i^G s}\\&
   \leftrightarrow
   \exists i \in f^{-1}\rnd{j} \rnd{s \succ_i^G s'}.
  \end{align*}
  Then,
  for some $i \in N^G$,
  $s \succ_i^G s'$.
  \item
  Consider the case in which
  for any $j \in N^H$,
  $t \succsim_j^H t'$.
  Let $i \in N^G$.
  Let $j' = f_1\rnd{i}$.
  Then,
  $t \succsim_{j'}^H t'$.
  Thus,
  $s \succsim_{f_1^{-1}\rnd{j'}}^H s'$.
  Thus,
  $s \succsim_i^G s'$.
  Thus,
  for any $i \in N^G$,
  $s \succsim_i^G s'$.
 \end{enumerate}
 Thus,
 for some $i \in N^G$,
 $s \succ_i^G s'$,
 or
 for any $i \in N^G$,
 $s \succsim_i^G s'$.
 Thus,
 $s'$ is not Pareto superior to $s$ in $G$.
 Thus,
 $s \in \PE\rnd{G}$.
 Thus,
 $\PE$ satisfies Condition \ref{con:sufficiency_rel}.
 Thus,
 by Proposition \ref{prop:conditions_rel},
 $\PE$ satisfies Condition \ref{con:equivalence_rel}.
 Thus,
 by Proposition \ref{prop:presheaf_rel},
 $\PE$ is well-defined.

 \subproof{``only if'' part}
 If ${\arrow} = {\rightarrow}$,
 by Lemma \ref{lem:well-definedness_rel_order-preserving},
 $\PE$ is not well-defined.
 Suppose that
 ${\arrow} = {\leftarrow}$.
 Let $G,H,I \in \ob{\Gam}$ such that
 for any $K \in \crly{G,H,I}$,
 $N^K = \crly{1,2}$,
 $\prod S^K = \crly{s,s'}$ for some distinct $s$ and $s'$,
 and
 \begin{align*}
  \begin{array}{ccc}
   s \sim_1^G s', & s \prec_1^H s', & s \prec_1^I s', \\
   s \sim_2^G s', & s \sim_2^H s',  & s \succ_2^I s'.
  \end{array}
 \end{align*}
 Let $f \coloneqq \rnd{\id_{N^G},\id_{\prod S^G}}$ and $g \coloneqq \rnd{\id_{N^G},\id_{\prod S^G}}$.
 $s \in \PE\rnd{G}$.
 Note that
 $f_2\rnd{s} = s$.
 Then,
 $s \in f_2\rnd{\PE\rnd{G}}$.
 $s \in \PE\rnd{I}$.
 Note that
 $g_2\rnd{s} = s$.
 Then,
 $s \in g_2^{-1}\rnd{\PE\rnd{I}}$.
 Thus,
 $s \in f_2\rnd{\PE\rnd{G}} \cap g_2^{-1}\rnd{\PE\rnd{I}}$.
 However,
 $s \notin \PE\rnd{H}$.
 Thus,
 $\PE$ does not satisfy Condition \ref{con:equivalence_rel}.
 Thus,
 by Proposition \ref{prop:presheaf_rel},
 $\PE$ is not well-defined.
\end{proof}

\begin{proof}[Proof of Proposition \ref{prop:pe_set}]
 By Lemma \ref{lem:well-definedness_set},
 $\PE$ is not well-defined.
\end{proof}

\newpage
\bibliographystyle{plainnat}
\bibliography{Category_of_Games_1}

\begin{thebibliography}{9}
\providecommand{\natexlab}[1]{#1}
\providecommand{\url}[1]{\texttt{#1}}
\expandafter\ifx\csname urlstyle\endcsname\relax
  \providecommand{\doi}[1]{doi: #1}\else
  \providecommand{\doi}{doi: \begingroup \urlstyle{rm}\Url}\fi

\bibitem[Awodey(2010)]{Awodey2010}
S.~Awodey.
\newblock \emph{Category Theory, Second Edition}.
\newblock Oxford University Press, 2010.

\bibitem[Ghani et~al.(2018)Ghani, Hedges, Winschel, and Zahn]{Ghani2018}
N.~Ghani, J.~Hedges, V.~Winschel, and P.~Zahn.
\newblock Compositional game theory.
\newblock In \emph{LICS '18: Proceedings of the 33rd Annual ACM/IEEE Symposium
  on Logic in Computer Science}, 2018.

\bibitem[Hedges(2016)]{Hedges2016}
J.~Hedges.
\newblock \emph{Towards Compositional Game Theory}.
\newblock PhD thesis, Queen Mary University of London, 2016.

\bibitem[Jim\'enez(2014)]{Jimenez2014}
A.~Jim\'enez.
\newblock Game theory from the category theory point of view.
\newblock Working paper, 2014.

\bibitem[Lapitsky(1999)]{Lapitsky1999}
V.~Lapitsky.
\newblock On some categories of games and corresponding equilibria.
\newblock \emph{International Game Theory Review}, 1:\penalty0 169--185, 1999.

\bibitem[Osborne and Rubinstein(1994)]{Osborne1994}
M.~J. Osborne and A.~Rubinstein.
\newblock \emph{A Course in Game Theory}.
\newblock MIT Press, 1994.

\bibitem[Streufert(2021)]{Streufert2021}
P.~A. Streufert.
\newblock A category for extensive-form games.
\newblock arXiv:2105.11398, 2021.

\bibitem[Tohm\'e and Viglizzo(2025)]{Tohme2025}
F.~Tohm\'e and I.~Viglizzo.
\newblock A categorical representation of games.
\newblock arXiv:2309.15981, 2025.

\bibitem[Vannucci(2024)]{Vannucci2024}
S.~Vannucci.
\newblock Categories of games and chu spaces.
\newblock Technical Report 918, Department of Economics and Statistics,
  University of Siena, 2024.

\end{thebibliography}

\end{document}